 \gdef\xxxmark{%
   \expandafter\ifx\csname @captype\endcsname\relax 
     \marginpar{xxx}
   \else
     xxx 
   \fi}
 \gdef\xxx{\@ifnextchar[\xxx@lab\xxx@nolab}
 \long\gdef\xxx@lab[#1]#2{{\bf [\xxxmark #2 ---{\sc #1}]}}
 \long\gdef\xxx@nolab#1{{\bf [\xxxmark #1]}}
\newcommand{\CASE}[1]{\STATE \textbf{case} #1\textbf{:} \begin{ALC@g}}
\newcommand{\ENDCASE}{\end{ALC@g}}
\newcommand{\DEFAULT}{\STATE \textbf{default:} \begin{ALC@g}}
\newcommand{\ENDDEFAULT}{\end{ALC@g}}
\newcommand{\DEFAULTLINE}[1]{\STATE \textbf{default:} }
\newtheorem{lemma}{Lemma}
\newtheorem{theorem}{Theorem}   
\newtheorem{corollary}{Corollary}
\newcommand*\Let[2]{#1 $\gets$ #2}
\newcommand{\opt}{\ensuremath{\operatorname{\textsc{Opt}}}\xspace}
\newcommand{\Opt}{\ensuremath{\operatorname{\textsc{Opt}}}\xspace}
\newcommand{\OPT}{\ensuremath{\operatorname{\textsc{Opt}}}\xspace}
\newcommand{\BF}{\ensuremath{\operatorname{\textsc{Bf}}}\xspace}
\newcommand{\IFF}{\ensuremath{\operatorname{\textsc{Iff}}}\xspace}
\newcommand{\RFF}{\ensuremath{\operatorname{\textsc{Rff}}}\xspace}
\newcommand{\RH}{\ensuremath{\operatorname{\textsc{Rh}}}\xspace}
\newcommand{\FF}{\ensuremath{\operatorname{\textsc{Ff}}}\xspace}
\newcommand{\NF}{\ensuremath{\operatorname{\textsc{Nf}}}\xspace}
\newcommand{\HA}{\ensuremath{\operatorname{\textsc{Ha}}}\xspace}
\newcommand{\MH}{\ensuremath{\operatorname{\textsc{Mh}}}\xspace}
\newcommand{\HM}{\ensuremath{\operatorname{\textsc{Hm}}}\xspace}
\newcommand{\HMM}{\ensuremath{\operatorname{\textsc{Hmm}}}\xspace}
\newcommand{\RHM}{\ensuremath{\operatorname{\textsc{Rhm}}}\xspace}
\newcommand{\OM}{\ensuremath{\operatorname{\textsc{Om}}}\xspace}
\newcommand{\ROM}{\ensuremath{\operatorname{\textsc{Rom}}}\xspace}
\newcommand{\RRM}{\ensuremath{\operatorname{\textsc{Rrm}}}\xspace}
\newcommand{\MFF}{\ensuremath{\operatorname{\textsc{Mff}}}\xspace}
\newcommand{\MBF}{\ensuremath{\operatorname{\textsc{Mbf}}}\xspace}
\newcommand{\algo}[1]{{#1}}{}
\newcommand{\longv}[1]{\hspace{-.11cm}}
\newcommand{\red}[1]{}
\newcommand{\redd}[1]{}
\newcommand{\reds}[1]{}
\newcommand{\XXXX}[1]{}
\newcommand{\cmnt}[1]{}
\begin{document}

\title{An All-Around Near-Optimal Solution for the \\ Classic Bin Packing Problem}

\author{
Shahin Kamali \and Alejandro L\'opez-Ortiz \thanks{
School of Computer Science,
University of Waterloo, Canada.
Email: {\tt \{s3kamali,alopez-o\}@uwaterloo.ca} }
}
\date{}
\pagestyle{plain}

\maketitle


\begin{abstract}
In this paper we present the first algorithm with optimal average-case and close-to-best known worst-case performance for the classic on-line problem of bin packing. It has long been observed that known bin packing algorithms with optimal average-case performance were not optimal in the worst-case sense. In particular First Fit and Best Fit had optimal average-case ratio of 1 but a worst-case competitive ratio of 1.7. 
The wasted space of First Fit and Best Fit for a uniform random sequence of length $n$ is expected to be $\Theta(n^{2/3})$ and $\Theta(\sqrt{n} \log ^{3/4} n)$, respectively. The competitive ratio can be improved to 1.691 using the Harmonic algorithm; further variations of this algorithm can push down the competitive ratio to 1.588. However, Harmonic and its variations have poor performance on average; in particular, Harmonic has average-case ratio of around 1.27. In this paper, first we introduce a simple algorithm which we term Harmonic Match. This algorithm performs as well as Best Fit on average, i.e., it has an average-case ratio of 1 and expected wasted space of $\Theta(\sqrt{n} \log ^{3/4} n)$. Moreover, the competitive ratio of the algorithm is as good as Harmonic, i.e., it converges to $ 1.691$ which is an improvement over 1.7 of Best Fit and First Fit. 
We also introduce a different algorithm, termed as Refined Harmonic Match, which achieves an improved competitive ratio of $1.636$ while maintaining the good average-case performance of Harmonic Match and Best Fit.
Finally, our extensive experimental evaluation of the studied bin packing algorithms shows that our proposed algorithms have comparable average-case performance with Best Fit and First Fit, and this holds also for sequences that follow distributions other than the uniform distribution.

\end{abstract}

\thispagestyle{empty} 
\newpage

\pagenumbering{arabic} 

\section{Introduction}
An instance of the classical online bin packing problem is defined by a sequence of \textit{items} which are revealed in an online manner. Each item has a size in the range $(0,1]$. The goal is to pack these items into a minimum number of bins which have a uniform capacity of 1. A natural algorithm for the problem is \algo{Next Fit (\NF)} which keeps one \textit{open} bin at each time. If a given item does not fit into the bin, the algorithm \textit{closes} the bin (i.e., it does not refer to it in future) and opens a new bin. In contrast to \NF, \algo{First Fit (\FF)} does not close any bin: It maintains the bins in the order they are opened and places a given item in the first bin which has enough space for it. In case such a bin does not exist, it opens a new bin for the item. \algo{Best Fit (\BF)} performs similarly to \FF, except that it maintains the bins in decreasing order of their \emph{levels}; the level of a bin is the total size of items placed in the bin. 
Another approach is to divide items into a constant number of classes based on their sizes and pack items of the same class apart from other classes. An example is the Harmonic (\HA) algorithm which has a parameter $K$ and defines $K$ intervals $(1/2,1], (1/3,1/2], \ldots, (1/(K-1),1/K]$, and $(0,1/K]$; items which belong to the same interval are separately treated using the \algo{Next Fit} strategy. 

Online bin packing algorithms are usually compared through their respective average-case performance and worst-case performance. Under average-case analysis, it is assumed that item sizes follow a fixed distribution that is typically a uniform distribution. With this assumption, one can define the \textit{average performance ratio} as the ratio between the expected cost of an online algorithm for a randomly selected sequence compared to the cost of \OPT. Here \OPT is an optimal offline algorithm with unbounded computational power. It is known that \NF has an average performance ratio of $1.\bar{3}$ \cite{CoHoSY80} for sequences generated uniformly at random.
\FF and \BF are optimum in this sense and have an average ratio of 1 \cite{BeJLMM84}. To further compare algorithms with average ratio of 1, one can consider the \textit{expected waste} which is the expected amount of wasted space for serving a sequence of length $n$. More precisely, the wasted space of an algorithm for serving a sequence $\sigma$ is the difference between the cost of the algorithm and the total size of items in $\sigma$. \FF and \BF have expected waste of sizes $\Theta(n^{2/3})$ and $\Theta(\sqrt{n} \log ^{3/4} n)$, respectively \cite{Shor86,CoJoSW95,LeiSho89}. It is also known that all online algorithms have expected waste of size $\Omega(\sqrt{n} \lg ^{1/2} n)$ \cite{Shor86}. These results show that \BF is almost the best online algorithm with respect to average performance. 

There are other algorithms which perform almost as well as \BF on average. These algorithms are based on matching a `large' item with a `small' item to place them in the same bin. 
Throughout the paper we call an item \textit{large} if it is larger than 1/2 and \textit{small} otherwise.
Among the matching-based algorithms are \algo{Interval First Fit} (\IFF) \cite{CsiriGalam86} and \algo{Online Match} (\OM) \cite{CofLue91}. 
\IFF has a parameter $K$ and divides the unit interval into $K$ intervals of equal length, namely $I_t=(\frac{t-1}{K},\frac{t}{K}]$ for $t= 1,2, \ldots, K$. Here, $K$ is an odd integer and we have $K=2j+1$. The algorithms defines $j+1$ classes so that intervals $I_c$ and $I_{k-c}$ form class $c$ ($1 \leq c \leq j$) and interval $I_k$ forms class $j+1$. Items in each class are packed separately from other classes. The items in class $c$ $(2 \leq c \leq j+1)$ are treated using \FF strategy, while the items in the first class are treated using an Almost \FF strategy. Almost \FF is similar to \FF except that it closes a bin when it includes a small and a large item; 
further, a large item is never placed in a bin which includes more than one small items, and a bin with $k$ small items is declared as being closed. The average ratio of \IFF is 1; precisely, it has an expected waste of $\Theta(n^{2/3})$.
Algorithm \OM has also a parameter $K$ and declares two items as being \textit{companions} if their sum is in the range $[1-\frac{1}{K},1]$. To place a large item, \OM opens a new bin. To place a small item $x$, the algorithm checks whether there is an open bin $\beta$ with a large companion of $x$; in case there is, \OM places $x$ in $\beta$ and closes $\beta$. Otherwise, it packs $x$ using a \NF strategy in a separate list of bins. The average ratio of \OM converges to 1 for large values of $K$ \cite{CofLue91}. 

Although the matching-based algorithms have acceptable average performance, they do not perform well in the worst-case. In particular, \IFF has an unbounded competitive ratio \cite{CsiriGalam86}, and the competitive ratio of \OM is 2 \cite{CofLue91}. Among other matching algorithms we might mention \algo{Matching Best Fit (\MBF)} which performs similarly to \BF except that it closes a bin as soon as it receives the first small item. The average ratio of \MBF is as good as \BF while it has unbounded competitive ratio \cite{Shor86}. There is another algorithm which has expected waste of size $\Theta(\sqrt{n} \lg ^{1/2} n)$ \cite{Shor91} which 
matches the lower bound of \cite{Shor86}. This algorithm also has a non-constant competitive ratio \cite{CoGaJo97}.

The competitive ratio\footnote{By competitive ratio, we mean \emph{asymptotic} competitive ratio where the number of opened bins by an optimal offline algorithm is arbitrarily large. For the results related to the \emph{absolute} competitive ratio, we refer the reader to \cite{CoGaJo97,BPsurvey2013}.} reflects the worst-case performance of online algorithms. More formally, it is the asymptotically maximum ratio between the cost of an online algorithm and that of \Opt for serving the same sequence. It is known that \NF has a competitive ratio of 2 while \FF and \BF have the same ratio of 1.7 \cite{JoDUGG74}. The competitive ratio of \HA converges to 1.691 for sufficiently large values of $K$ \cite{LeeLee85}. To be more precise, it approaches $T_\infty = \sum_1^\infty\frac{1}{t_i-1}$, where $t_1=2$ and $t_{i+1} = t_i(t_i-1)+1, i>1$ \footnote{Some notations are borrowed from \cite{CoGaJo97}.}. There are online algorithms which have even better competitive ratios. These include Modified First Fit (\MFF) with ratio 1.666 \cite{Yao80A}, Modified Harmonic with ratio around 1.635 \cite{LeeLee85}, and Harmonic++ with ratio $1.588$ \cite{Seid02}. These algorithms are members of a general framework of Super Harmonic algorithms \cite{Seid02}. Similar to \HA, Super Harmonic algorithms classify items by their sizes and pack items of the same class together. However, to handle the bad sequences of \HA, a fraction of opened bins include items from different classes. These bins are opened with items of small sizes in the hopes of subsequently adding items of larger sizes. At the time of opening such a bin, it is pre-determined how many items from each class should be placed in the bin. As the algorithms runs, the reserved spot for each class is occupied by an item of that class. It is guaranteed that the reserved spot is enough for any member of the class. This implies that the expected total size of items in the bin is strictly less than 1 by a positive value. Consequently, the expected waste of the algorithm is linear to the number of opened bins. Hence, for a sequence of length $n$, these algorithm have an expected waste of $\Omega(n)$. Since the expected wasted space of \opt is $o(n)$, the average performance ratio of Super Harmonic algorithms is strictly larger than 1. 
In particular, \algo{Refined Harmonic} and \algo{Modified Harmonic} have average performance ratios around 1.28 and 1.18, respectively \cite{GuChXu02,RamaTsuga89}.

Table \ref{competitiveTable} shows the existing results for major bin packing algorithms. 
As pointed out by Coffman et al. in \cite{CoGaJo97}, \textit{` All algorithms that do better than First Fit in the worst-case seem to do much worse in the average case.'} In this paper, however, we show that this is not a necessary condition and give an algorithm whose average-case ratio, competitive ratio, and expected wasted space are all at or near the top of each class. This also addresses a conjecture in \cite{GuChXu02} stated as \textit{`Harmonic-K is better than First Fit in the worst-case performance,
and First Fit is better than Harmonic-K in the average-case performance. Maybe there exists
an on-line algorithm with the advantages of both First Fit and Harmonic-K.'}

\subsection*{Contribution}

We introduce an algorithm called Harmonic Match (\HM) and show this algorithm is better than \BF and \FF in the worst case, while it performs as well as \BF and \FF on average. In particular, we show the competitive ratio of \HM is as good as \HA, i.e., it approaches $T_\infty \approx 1.691$ for sufficiently large values of $K$. For sequences generated uniformly at random, the average performance ratio of \HM is 1, which is as good as \BF and \FF. The expected waste of \HM is $\Theta(\sqrt{n} \log ^{3/4} n)$ which is as good as \BF and better than \FF. The algorithm is easy to implement and has the same running time as \BF.

\HM can be seen as a general way to improve the performance of the Super Harmonic class of algorithms in general and Harmonic algorithm in particular. 
 We illustrate this for the simplest member of this family, namely Refined Harmonic algorithm. To do so, we introduce a new algorithm, called Refined Harmonic Match (\RHM), and show that the competitive ratio of the algorithm is at most equal to 1.636 of Refined Harmonic, while its average-case ratio is 1 which as good as \BF and \HM. The expected waste of \RHM is equal to that of \BF. Consequently, the algorithm achieves the desired average-case performance of \BF and also the worst-case performance of Refined Harmonic. 

Similar to \HA, \HM and \RHM are based on classifying items based on their sizes and treating items of each class (almost) separately. To boost the average-case performance, these algorithms \textit{match} large items with proportionally smaller items through assigning them to the same classes. Careful definition of classes results in the same average-case performance of \MBF. To some extent, our competitive analyses of \HM and \RHM are similar to those of \HA and Refined \HA, respectively. 
Similarly, the average-case analyses of the algorithms are closely related to the analysis of \MBF algorithm and uses similar techniques.


To evaluate the average-case performance of the introduced algorithms in real-world scenarios, we tested them on sequences that follow \emph{discrete} uniform distribution as well as other distributions. 
We compared \HM and \RHM with the existing algorithms and observed that they have comparable performance with \BF and \FF. At the same time, these algorithms had a considerable advantage over other members of the Harmonic family of algorithms. We conclude that \HM and \RHM have better average-case performance than existing algorithms which outperform \BF and \FF in the worst-case scenarios.



\begin{small}
\begin{table*}[!t]

\begin{center}
\begin{tabular}{|c|c|c|c|| }
\hline
  	 Algorithm 																&  Average Ratio 						& Expected waste		& Competitive Ratio														\\
	\hline
	\hline
	 Next Fit (\NF) 										& $1.\bar{3}$ \cite{CoHoSY80} 							& $\Omega(n)$																							& 2 																					\\
	  \hline
	 Best Fit (\BF) 										& 1 \cite{BeJLMM84}													& $\Theta(\sqrt{n} \log ^{3/4} n)$ \cite{Shor86,LeiSho89} & 1.7 \cite{JoDUGG74} 													\\
	  \hline	
	 First Fit (\FF) 										& 1	\cite{LeiSho89}						  						& $\Theta(n^{2/3})$ \cite{Shor86,CoJoSW95}.								& 1.7 \cite{JoDUGG74}      		  								\\
	  \hline
	 Harmonic (\HA)     								& 1.2899 \cite{LeeLee85}										& $\Omega(n)$ 																						& $\rightarrow T_\infty \approx 1.691$ \cite{LeeLee85}					  						\\
	 \hline
	 Refined First Fit (\RFF)						&	$> 1$																			& $\Omega(n)$																							& $1.\bar{6}$ \cite{Yao80A}											\\
	 \hline
	 Refined Harmonic (\RH) 						& 1.2824 \cite{GuChXu02}										& $\Omega(n)$																							& 1.636\cite{LeeLee85,GuChXu02} 							\\
	 \hline
	 Modified Harmonic (\MH) 						&	1.189 \cite{RamaTsuga89}									& $\Omega(n)$																							& 1.615\cite{RamBrowLeeLee89} 								\\
	 \hline	
	 Harmonic++  												&	$> 1$																			& $\Omega(n)$																							& 1.588 \cite{Seid02}													\\	 
	 \hline	
	 \textbf{Harmonic Match (\RH)}  		&	$\mathbf{1}$															  & $\mathbf{\Theta(\sqrt{n} \log ^{3/4} n)}$							& $\mathbf{\rightarrow T_\infty \approx 1.691}$ 		  	\\	  \hline 
	\textbf{Refined Harmonic Match (\RHM)}  				&	$\mathbf{1}$															  & $\mathbf{\Theta(\sqrt{n} \log ^{3/4} n)}$	& $\mathbf{< 1.636}$ 		  	\\	 
	
   \hline \end{tabular}
  \caption{\label{competitiveTable} Average performance ratio, expected waste (under continuous uniform distribution), and competitive ratios for different bin packing algorithms. Results in bold are our contributions.}

\end{center}

\end{table*}

\end{small}


\section{Harmonic Match Algorithm}
Recall that Harmonic (\HA) algorithm which has a parameter $K$ and defines $K$ classes $(1/2,1], (1/3,1/2],$ $\ldots, (1/(K-1),1/K]$, and $(0,1/K]$; items in the same class are separately treated using the \algo{Next Fit} strategy. 
Similarly to Harmonic algorithm, Harmonic Match has a parameter $K$ and divides items into $K$ classes based on their sizes. 
We use $\HM_K$ to refer to Harmonic Match with parameter $K$. The algorithm defines $K$ pairs of intervals as follows. The $i$th pair ($1 \leq i \leq k-1 $) contains intervals $(\frac{1}{i+2},\frac{1}{i+1}]$ and $(\frac{i}{i+1},\frac{i+1}{i+2}]$. The $K$th pair includes intervals $(0,\frac{1}{K+1}]$ and $(\frac{K}{K+1},1]$. An item $x$ belongs to class $i$ if the size of $x$ lies in any of the two intervals associated with the $i$th pair (see Figure \ref{fig:intervals}). Intuitively, the items which are `very large' or `very small' belong to the $K$th class, and as the item sizes become more moderate, they belong to classes with smaller indices. 

\begin{figure}[!t]
\centering
\includegraphics[width=0.4\columnwidth, trim = 0mm 202mm 94mm 0mm, clip]{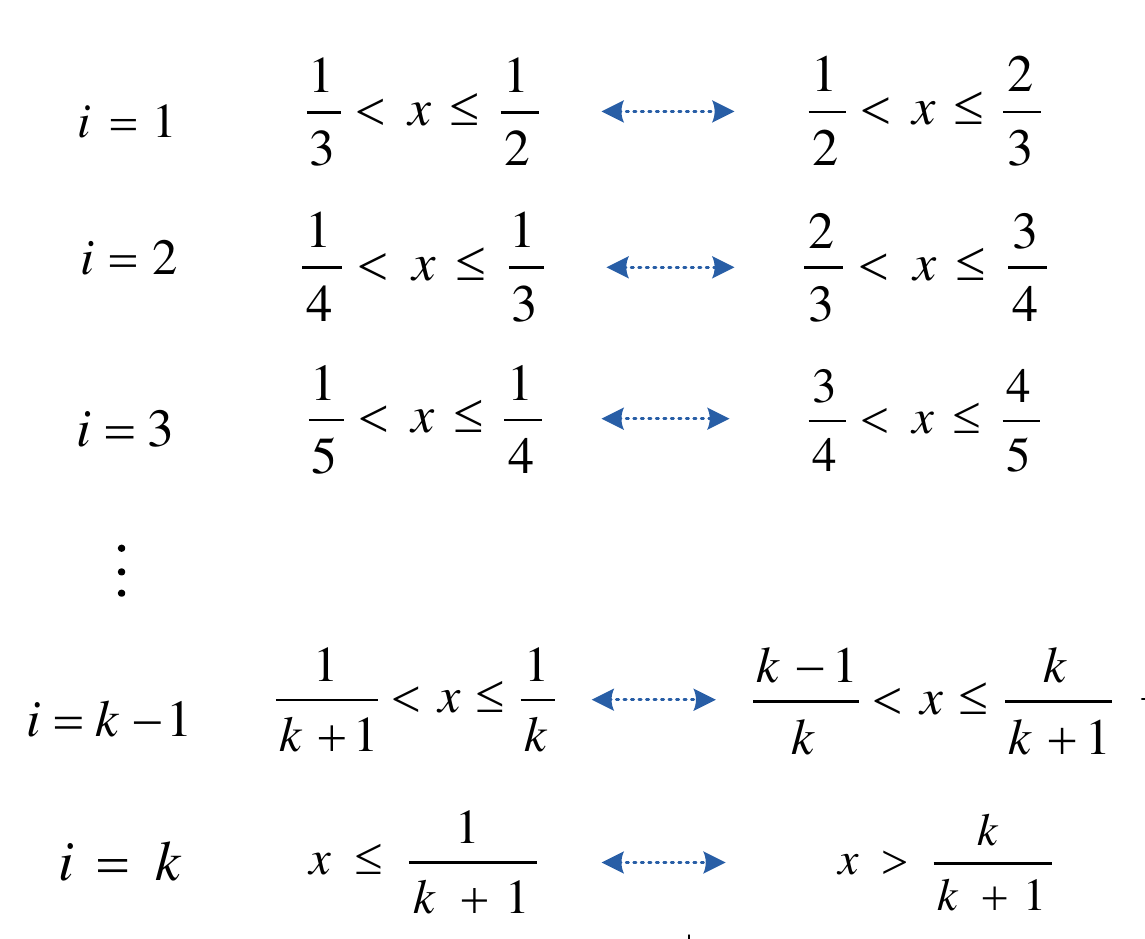}
\caption{The classes defined by \HM. The algorithm matches items from intervals indicated by arrows.}
\label{fig:intervals}
\end{figure}


When compared to the intervals of Harmonic algorithms, one can see the first interval of the $i$th pair in the Harmonic Match algorithm $\HM_K$ is the same as the $(i+1)$th interval of Harmonic algorithm $\HA_{K+1}$ $(1 \leq i\leq K)$. Namely, the intervals are the same in both algorithms except that the interval $(\frac{1}{2},1]$ of \HA is further divided into $K+1$ more intervals. In other words, $\HM_K$ is similar to $\HA_{K+1}$, except that it tries to match large items with proportionally smaller item. 
The pairs of intervals which define a class in \HM have the same length, e.g., in the first pair, both intervals have length $\frac{1}{6}$. This property is essential for having good average-case performance.

The packing maintained by \HM include two types of bins: the \emph{mature} bins which are almost full and \emph{normal} bins might become mature by receiving more items. 
For placing an item $x$, \HM detects the class that $x$ belongs to and applies the following strategy to place $x$. 
If $x$ is large item (recall that by large we mean larger than $\frac{1}{2}$), \HM opens a new bin for $x$ and declares it as a normal bin. If $x$ is small, the algorithm applies \BF strategy to place $x$ in a mature bin. If there is no mature bin with enough space, the \BF strategy is applied again to place $x$ in a normal bin which contains the largest `companion' of $x$. A companion of $x$ is a large item of the same class which fits with $x$ in the same bin. In case the \BF strategy succeeds to place $x$ in a bin (i.e., there is a normal bin with a companion of $x$) the selected bin is declared as being mature. Otherwise (when there is no companion for $x$), the algorithm applies \NF strategy to place $x$ in a single normal bin maintained for that class; such a bin only includes small items of the class. If the bin maintained by the \NF strategy does not have enough space, it is declared as a mature bin and a new \NF-bin is opened for $x$. Note that \HM, as defined above, is simple to implement and its time complexity is as good as \BF.

\HM treats items of the same class in a similar way that \algo{Online Match (\OM)} algorithm does, except that there is no restriction on the sum of the sizes of two companion items. Recall that \OM has a parameter which defines a lower bound for the sum of two items in a bin. To facilitate our analysis in the following sections, we define algorithm \algo{Relaxed Online Match (\ROM)} as a subroutin of \HM as follows. To place a large item, \ROM opens a new bin. To place a small item $x$, it applies the \BF strategy to place $x$ in an open bin with a single large item and closes the bin. If such a bin does not exists, \ROM places $x$ using \NF strategy (and opens a new bin if necessary). Using \ROM, we can describe Harmonic Match algorithm in the following way. To place a small item, $\HM_K$ tries to place it in a mature bin using \BF strategy. Large items and the small items which do not fit in mature bins are treated using \ROM strategy along with other items of their classes (which did not fit in mature bins). The bins which are closed by \ROM strategy are declared as mature bins.


\subsection{Worst-Case Analysis}

For the worst-case analysis of \HM, we observe that the Harmonic algorithm is \emph{monotone} in the sense that removing an item does not increase its cost:

\begin{lemma}\label{HAMono}
Removing an item does not increase the costs for the Harmonic algorithm.
\end{lemma}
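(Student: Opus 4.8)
The plan is to reduce the statement to a monotonicity property of \NF on a single class, and then to establish that property by a coupling argument. Recall that \HA partitions items into $K$ classes determined \emph{solely} by item size, and packs each class independently with the \NF strategy, so the cost of \HA is the sum over the $K$ classes of the number of bins \NF opens for that class. Deleting an item $x$ changes neither the class of any other item nor the order of items within any class; it merely deletes $x$ from the subsequence handed to \NF for the single class $C$ containing $x$, leaving every other class's subsequence untouched. Hence it suffices to show that removing one item from the input of a single \NF instance never increases the number of bins that instance opens.

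For every class but the last this is immediate. A class $(\tfrac{1}{j+1},\tfrac{1}{j}]$ contains items of size in that range, so under \NF each bin receives \emph{exactly} $j$ items: any $j$ of them sum to at most $j\cdot\tfrac1j=1$ and fit, while a $(j{+}1)$-st item raises the level past $\tfrac{j}{j+1}+\tfrac{1}{j+1}=1$ and is rejected. Thus a class with $m$ items costs exactly $\lceil m/j\rceil$ bins, a nondecreasing function of $m$, and deleting an item cannot increase it. The only class without a fixed number of items per bin is the last one, $(0,\tfrac1K]$, where the order of arrivals genuinely affects the \NF packing; this is where the real work lies.

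To treat a general \NF instance I would run \NF in parallel on $\sigma$ (the full class subsequence) and on $\sigma'$ (the same subsequence with one item $x_p$ removed), tracking after each step the pair $(B,\ell)$ of bin count and level of the currently open bin, and similarly $(B',\ell')$ for $\sigma'$. The two runs agree up to $x_p$; thereafter I maintain the invariant that at every step either (I) $B'=B$ and $\ell'\le\ell$, or (II) $B'\le B-1$. The base case is a two-line check: reading $x_p$, \NF on $\sigma$ either keeps the bin count and raises its level (giving (I)) or opens a new bin (giving (II)), while $\sigma'$ does nothing. For the inductive step I would feed the next common item to both runs and verify, over the handful of fit/no-fit combinations, that one of (I),(II) survives; the delicate transition is (I)$\to$(II), which fires exactly when the extra slack of $\sigma'$ lets an item fit there while it forces a new bin in $\sigma$, together with the return transition that must keep the levels correctly ordered. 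Since both invariants imply $B'\le B$, reading off the invariant at the end shows the \NF instance on $\sigma'$ opens no more bins than on $\sigma$, completing the reduction and hence the lemma.

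The main obstacle is precisely this last class: because \NF is order-dependent there, no count-based closed form is available, so I cannot argue by a formula and must instead push the coupling invariant through all fit/no-fit cases, carefully tracking how the ``one bin ahead'' advantage of $\sigma'$ and the relative bin levels coevolve.
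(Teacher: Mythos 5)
Your proposal is correct, and its top-level structure is the same as the paper's: the cost of \HA is $\sum_i \NF(\sigma_i)$ over the class subsequences, deleting an item changes only the one class containing it, so the lemma reduces to monotonicity of \NF on a single class. Where you differ is in how that reduction is discharged. The paper simply cites the known result that \NF is monotone (Murgolo, 1988) and stops; you instead prove it from scratch, via a closed-form count $\lceil m/j \rceil$ for the classes $(\frac{1}{j+1},\frac{1}{j}]$ (where every \NF bin holds exactly $j$ items), plus a coupling argument for the order-sensitive last class, maintaining the invariant that the reduced run is either one full bin behind or at the same bin count with an open-bin level no higher. That invariant is the right one, and the transitions you flag as delicate do close: the move into state (II) fires exactly when the item fits only in the reduced run's bin, and the return to state (I) is sound because when the reduced run opens a fresh bin while the original run keeps filling, the fresh level $y$ is at most the original's new level $\ell + y$. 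So your argument is self-contained and more elementary, at the cost of length and of case bookkeeping that you sketch rather than fully enumerate (though the enumeration is routine); the paper's proof is two lines but rests on an external result. A minor remark: the $\lceil m/j \rceil$ shortcut for the non-last classes is harmless but unnecessary, since your coupling invariant handles every class uniformly.
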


\begin{proof}
Recall that $\HA_K$ defines a class for each item and applies the \NF strategy to place each item together with items of the same class. So, the cost of the algorithm for serving a sequence $\sigma$ is $\NF(\sigma_1) + \NF(\sigma_2) + \ldots + \NF(\sigma_K) $, where $\sigma_i$ is the sequence of items which belong to class $i$. Assume an item $x$ is removed from $\sigma$ and let $j$ denote the class that $x$ belongs to $(1 \leq j \leq K)$. The cost of $\HA_K$ for serving the reduced sequence (in which $x$ is removed) will be the same except that $\NF(\sigma_j)$ is replaced by $\NF(\sigma'_j)$, where $\sigma'_j$ is a copy of $\sigma_j$ in which $x$ is missing. Since \NF is monotone \cite{Murgolo88}, we have $\NF(\sigma'_j) \leq \NF(\sigma_j)$. Consequently, the cost of \HA cannot increase after removing $x$. 
\end{proof}

We use the above lemma to show that the cost of $\HM_K$ for serving any sequence $\sigma$ is no larger than that of $\HA_{K+1}$. Consequently, the competitive ratio of $\HM_K$ is no larger than that of $\HA_{K+1}$.

\begin{theorem} \label{th1}
The cost of $\HM_{K}$ to serve any sequence $\sigma$ is no larger than that of $\HA_{K+1}$.
\end{theorem}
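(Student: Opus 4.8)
The plan is to exploit the interval correspondence noted just before the statement: the small (first) interval of class $i$ of $\HM_K$ coincides with class $i+1$ of $\HA_{K+1}$, while the large (second) intervals of the $K$ classes of $\HM_K$ together partition $(\frac12,1]$, which is exactly class $1$ of $\HA_{K+1}$. Write $L_i$ for the subsequence of $\sigma$ whose items fall in the large interval of class $i$, and $S_i$ for those in the small interval. Since every item of $\bigcup_i L_i$ exceeds $\frac12$, the \NF strategy that $\HA_{K+1}$ applies to its class $1$ puts each such item in its own bin, so class $1$ contributes exactly $\sum_{i=1}^K |L_i|$ bins, while the remaining classes contribute $\sum_{i=1}^K \NF(S_i)$; hence
\[
\HA_{K+1}(\sigma) \;=\; \sum_{i=1}^K |L_i| \;+\; \sum_{i=1}^K \NF(S_i).
\]

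Next I would account for every bin opened by $\HM_K$, using its \ROM description. A bin is opened in exactly one of two situations: a bin is opened whenever a large item arrives, and a bin is opened by the per-class \NF process whenever a small item reaches that process and the current \NF-bin of its class is full or absent. Mature bins are not a third category: they are bins of the two kinds above that have been reclassified after closing, and placing a small item into a mature bin (via the \BF step, possibly one belonging to another class) or onto a large companion never opens a new bin. Thus each large item of class $i$ accounts for exactly one bin, for $\sum_i |L_i|$ in total, and the \NF-bins of class $i$ are precisely those produced by running \NF on the subsequence $S_i' \subseteq S_i$ of small items of the class that were diverted neither into a mature bin nor onto a companion, in their original arrival order. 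This gives
\[
\HM_K(\sigma) \;=\; \sum_{i=1}^K |L_i| \;+\; \sum_{i=1}^K \NF(S_i').
\]

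Finally I would invoke the monotonicity of \NF used in the proof of Lemma~\ref{HAMono} (from \cite{Murgolo88}): since $S_i'$ is obtained from $S_i$ by deleting the diverted items, $\NF(S_i') \le \NF(S_i)$ for every $i$. Summing this over $i$ and combining with the two displayed identities yields $\HM_K(\sigma) \le \HA_{K+1}(\sigma)$, as claimed.

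The main obstacle, and the step I would write most carefully, is the bin-accounting of the middle paragraph. One must check that the cross-class \BF placement into mature bins can neither create an uncounted bin nor perturb the per-class \NF count, and that the small items reaching each class's \NF process arrive in their original relative order, so that the monotonicity of \NF applies verbatim to $S_i' \subseteq S_i$. Once the invariant is established that every opened bin is charged either to a unique large item or to a unique \NF-bin opening within a single class, the remainder is routine bookkeeping.
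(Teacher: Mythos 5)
Your proof is correct and takes essentially the same route as the paper's: both decompose the packing by class, observe that $\HM_K$'s cost is exactly one bin per large item plus the cost of \NF run on the subsequence of small items that actually reach the per-class \NF process, match this against the identical structure of $\HA_{K+1}$ on the full subsequences, and conclude via Murgolo's monotonicity of \NF. The only packaging difference is that the paper deletes the matched (``red'') small items to form a reduced sequence $\sigma'$, claims $\HM_K(\sigma)=\HA_{K+1}(\sigma')$, and invokes monotonicity of \HA (Lemma~\ref{HAMono}) as a black box, whereas you inline that lemma by applying \NF-monotonicity class by class; your notion of ``diverted'' items (which also covers small items absorbed into mature bins containing no large item) is, if anything, slightly more careful than the paper's red/white colouring.
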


\begin{proof}
Consider the final packing of \HM for serving $\sigma$. Colour a small item \textit{red} if 
it is packed with a large item in the same bin; colour all other small items \textit{white}. 
Consider the sequence $\sigma$' which is the same as $\sigma$ except that the red items are removed. We claim $\HM_{K}(\sigma)$ = $\HA_{K+1}(\sigma')$. Let $\sigma_i$ denote the sequence of items which belong to class $i$ of $\HM_{K}$ ($1 \leq i \leq K$). The cost of $\HM_K$ for serving $\sigma_i$ is $l_i + \NF(W_i)$, where $l_i$ is the number of large items $\sigma_i$ and $W_i$ is the sequence formed by white items in $\sigma_i$. Let $\sigma'_i$ be a subsequence of $\sigma_i$ in which red items are removed (hence it is also a subsequence of $\sigma'$). 
Since small and large items are treated separately by $\HA_{K+1}$, the cost of $\HA_{K+1}$ for serving $\sigma'_i$ is also $l_i + \NF(W_i)$. Hence, $\HM_K(\sigma_i) = \HA_{K+1}(\sigma'_i)$. Taking the sum over all classes, we get $\HM_{K}(\sigma)$ = $\HA_{K+1}(\sigma')$.
On the other hand, by Lemma \ref{HAMono}, \HA is monotone and $\HA_{K+1}(\sigma') \leq \HA_{K+1}(\sigma)$. Consequently, $\HM_{K}(\sigma) \leq \HA_{K+1}(\sigma)$. 
\end{proof}

We show that the upper bound given in the above theorem is tight. Consequently, we have:

\begin{corollary} \label{col1}
The competitive ratio of $\HM_K$ is equal to that of $\HA_{K+1}$, i.e., it converges to $T_\infty \approx 1.691$ for large values of $K$. 
\end{corollary}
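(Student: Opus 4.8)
The upper bound is immediate from Theorem~\ref{th1}: since $\HM_{K}(\sigma)\le\HA_{K+1}(\sigma)$ for every input $\sigma$, the asymptotic competitive ratio of $\HM_{K}$ is at most that of $\HA_{K+1}$, which approaches $T_\infty$ for large $K$ \cite{LeeLee85}. All the remaining work is the matching lower bound, and the plan is to construct instances on which $\HM_{K}$ performs no matching at all, so that it degenerates exactly into $\HA_{K+1}$ and inherits its worst case.

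First I would isolate the condition under which matching is suppressed. A small item is coloured red (co-packed with a large item) only if, at the instant it is served, its class already contains an open normal bin holding a large companion; a large item, by contrast, always opens a fresh bin and never joins a bin that already holds items. Hence on any instance in which every small item precedes every large item, no small item can find a companion and no large item can capture an earlier small item, so the final packing contains no bin with both a large and a small item. By the red/white decomposition in the proof of Theorem~\ref{th1}, every small item is then white and is handled by the within-class \NF\ subroutine, which yields the exact identity $\HM_{K}(\sigma)=\HA_{K+1}(\sigma)$ for every such ``small-first'' instance.

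Next I would feed a worst-case instance of $\HA_{K+1}$ in small-first order. Using the standard Lee--Lee construction, I present, for each size class of $\HA_{K+1}$, a batch of $N$ identical items of the corresponding critical size $\tfrac{1}{t_i}+\epsilon$; each such batch is small and lands in a single class, and the unique large items, of size $\tfrac12+\epsilon$, are presented last. Because each class receives items of one common size, the \NF\ cost of $\HA_{K+1}$ on that class equals $\lceil N/(t_i-1)\rceil$ irrespective of the order within the class, so this reordering does not change $\HA_{K+1}(\sigma)$; and since \OPT\ ignores order, $\OPT(\sigma)$ is unchanged as well. The instance therefore still forces $\HA_{K+1}(\sigma)/\OPT(\sigma)$ arbitrarily close to the competitive ratio $\rho_{K+1}$ of $\HA_{K+1}$, and combined with the identity of the previous paragraph we obtain $\HM_{K}(\sigma)/\OPT(\sigma)\to\rho_{K+1}$. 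Together with the upper bound this shows the two ratios coincide, and $\rho_{K+1}\to T_\infty$ as $K\to\infty$.

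The step that needs genuine care --- and the one I expect to be the main obstacle --- is the interaction of the white items with \emph{mature} bins, since the first action of \HM\ on a small item is a \BF\ attempt into a mature bin of its class. I must verify that this step cannot pack the white items more tightly than plain \NF\ on the chosen instance: for a class whose items all share size $\tfrac{1}{t_i}+\epsilon$, a matured bin already holds $t_i-1$ of them and leaves residual space $\tfrac{1}{t_i}-(t_i-1)\epsilon<\tfrac{1}{t_i}+\epsilon$, so no further item of the class fits and the \BF\ step is vacuous. This is precisely what the per-class red/white accounting of Theorem~\ref{th1} guarantees in general once no item is red, so the argument reduces to the two checks carried out above: that the small-first ordering produces no red items, and that the monochromatic batches make the reordering cost-neutral for both $\HA_{K+1}$ and \OPT.
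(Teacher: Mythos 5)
Your route is the same as the paper's: the upper bound is immediate from Theorem~\ref{th1}, and the lower bound comes from reordering a worst-case sequence of $\HA_{K+1}$ so that no small item ever finds a large companion, forcing $\HM_K$ to degenerate into $\HA_{K+1}$. But there is a genuine gap in exactly the step you flagged as delicate. You assume a small item attempts its \BF placement only into mature bins \emph{of its own class}; in the paper's algorithm the mature bins form one global pool shared by all classes (this is explicit in the description of \RHM, where bins matured by \RRM are added to ``the set of mature bins maintained by the \HM algorithm applied for placing items in other classes'', and this sharing is the whole mechanism behind the average-case gains). Hence verifying that a class-$i$ item cannot re-enter a matured class-$i$ bin is not sufficient; you must also rule out absorption \emph{across} classes. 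Your ``small-first'' condition and your construction leave the relative order of the small batches unspecified, and for some orders the identity $\HM_K(\sigma)=\HA_{K+1}(\sigma)$ actually fails: if the batch of $1/3+\epsilon$ items precedes the batch of $1/7+\epsilon$ items, every matured bin holds two $1/3+\epsilon$ items and has residual space $1/3-2\epsilon$, which for small $\epsilon$ exceeds $2(1/7+\epsilon)$; each such bin therefore absorbs two $1/7+\epsilon$ items, the entire $1/7+\epsilon$ batch vanishes into mature bins at zero cost to $\HM_K$, and the achieved ratio drops strictly below the competitive ratio of $\HA_{K+1}$.

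The repair is precisely the paper's choice: present the items in globally \emph{increasing} order of size, i.e., the batches from the smallest size class upward. Then every mature bin arises from an \NF closure, so its residual space is smaller than the size of the item whose arrival closed it, and hence smaller than the size of every later item in the sorted order; consequently no later item of \emph{any} class fits into \emph{any} mature bin, and your identity holds. With this one-line fix your argument is complete, and on the remaining points you are actually more careful than the paper: the paper asserts $\HM_K(\sigma_\pi)=\HA_{K+1}(\sigma)$ for a generic worst-case sequence $\sigma$ after sorting, whereas your use of monochromatic Lee--Lee batches is what really justifies that the reordering changes neither the per-class \NF cost nor the cost of \Opt.
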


\begin{proof}
Let $\alpha$ denote a lower bound for the competitive ratio of $\HA_{K+1}$ and consider a sequence $\sigma$ for which the cost of $\HA_{K+1}$ is $\alpha$ times more than that of \Opt. 
Define a sequence $\sigma_\pi$ as a permutation of $\sigma$ in which items are sorted in increasing order of their sizes. When applying $\HM_K$ on $\sigma_\pi$, all large items will be unmatched in their bins (no other item is packed in their bins). Hence, the cost of $\HM_K$ for packing $\sigma_\pi$ is the same as $\HA_{K+1}$ for packing $\sigma$, i.e., $\alpha$ times the cost of \opt for serving $\sigma$ and $\sigma_\pi$ (note that \opt uses an identical packing for both $\sigma$ and $\sigma_\pi$). 
\end{proof}

To achieve a competitive ratio better than 1.7 of \BF and \FF for $\HM_K$, it is sufficient to have $K \geq 6$. In that case, $\HM_6$ performs as well as $\HA_{7}$, which has a competitive ratio of at most 1.695. 


\subsection{Average-Case Analysis}\label{avgHmSection}
In this section, we study the average-case performance of the \HM algorithm under a uniform distribution. 
Like most related work, we make use of the results related to the \textit{up-right matching} problem. An instance of this problem includes $n$ points generated uniformly at random in a unit-square in the plane. Each point receives a $\oplus$ or $\ominus$ label with an equal probability. The goal is to find a maximum matching of $\oplus$ points with $\ominus$ points so that in each pair of matched points the $\oplus$ point appear above and to the right of the $\ominus$ point. Let $U_n$ denote the number of unmatched points in an optimal up-right matching of $n$ points. For the expected size of $U_n$, it is known that $E[U_n] = \Theta (\sqrt{n} \log ^{3/4}n)$ \cite{Shor86,LeiSho89,RheTal88a,CofSho91}. Given an instance of bin packing defined by a sequence $\sigma$, one can make an instance of up-right matching as follows \cite{KaLuMa84}: Each item $x$ of $\sigma$ is plotted as a point in the unit square. 
the vertical coordinate of such point corresponds to the index of $x$ in $\sigma$ (normalized to fit in the square). If $x$ is smaller than $1/2$, the point associated with $x$ is labeled as $\ominus$ and its horizontal coordinate will be $2x$; otherwise, the point will be $\oplus$ and its horizontal coordinate will be $2-2x$. Note that the resulted point will be bounded in the unit square. 
A solution to the up-right matching instance gives a packing of $\sigma$ in which the items associated with a pair of matched points are placed in the same bin. Note that the sum of the sizes of these two items is no more than the bin capacity. Also, in such solution, each bin contains at most two items.

For our purposes, we study $\sigma_t$ as a subsequence of $\sigma$ which only includes items which belong to the same class in the \HM algorithm. The items in $\sigma_t$ are generated uniformly at random from $(\frac{1}{t+1},\frac{1}{t}] \cup (\frac{t-1}{t},\frac{t}{t+1}]$. Since the two intervals have the same length, the items can be plotted in a similar manner on the unit square as follows. 
The horizontal coordinate of a small item with size $x$ is $x \times t(t+1) - t$ and for large items it is $x \times t(t+1) - (t^2-1)$. The label of the item and its vertical coordinate are defined as before.

Any bin packing algorithm which closes a bin after placing a small item can be applied to the up-right matching problem. Each edge in the up-right matching instance corresponds to a bin which includes one small and one large item. Recall that the algorithm Matching Best Fit (\MBF) applies a \algo{Best Fit} strategy except that it closes a bin as soon as it receives an item with size smaller than or equal to 1/2. So, \MBF can be applied in the up-right matching problem. Indeed, it creates an optimal up-right matching, i.e., if we apply \MBF on a sequence $\sigma_t$ which is randomly generated from $(0,1]$, the number of unmatched points will be $\Theta (\sqrt{n_t} \log ^{3/4}n_t)$, where $n_t$ is the length of $\sigma_t$ \cite{Shor86}. 
We show the same result holds for the bin packing sequences in which items are taken uniformly at random from $(\frac{1}{t+1},\frac{1}{t}] \cup (\frac{t-1}{t},\frac{t}{t+1}]$:

\begin{lemma}\label{MBFEx}
For a sequence $\sigma_t$ of length $n_t$ in which item sizes are selected uniformly at random from $(\frac{1}{t+1},\frac{1}{t}] \cup (\frac{t-1}{t},\frac{t}{t+1}]$, we have $E[\MBF(\sigma_t)] = n_t/2 + \Theta(\sqrt{n_t} \log^{3/4}n_t)$.
\end{lemma}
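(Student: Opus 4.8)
The plan is to reduce $\MBF$ on $\sigma_t$ to an optimal up-right matching on $n_t$ uniformly random labeled points, so that the known bound $E[U_n]=\Theta(\sqrt{n}\log^{3/4}n)$ applies verbatim. First I would note that, since the two intervals $(\frac{1}{t+1},\frac{1}{t}]$ and $(\frac{t-1}{t},\frac{t}{t+1}]$ both have length $\frac{1}{t(t+1)}$, a uniformly random item of $\sigma_t$ is small (size $\le 1/2$) with probability exactly $1/2$ and large otherwise, and conditioned on each case the size is uniform on the corresponding interval. Applying the two affine maps stated above, $a=x\,t(t+1)-t$ for a small item and $b=x\,t(t+1)-(t^2-1)$ for a large item, each item is sent to a point whose horizontal coordinate is uniform on $(0,1]$; together with the index-based vertical coordinate this produces $n_t$ points that are uniform in the unit square, each independently labeled $\ominus$ or $\oplus$ with probability $1/2$. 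Hence the transformed point set has precisely the distribution assumed by the up-right matching bound.

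Next I would identify the bins produced by $\MBF$ with an optimal up-right matching on these points. A short computation shows that a small item $s$ and a large item $\ell$ of class $t$ satisfy $s+\ell\le 1$ if and only if their coordinates satisfy $a+b\le 1$; the reflection $b\mapsto 1-b$ of the large coordinate (which preserves uniformity) turns this into $a\le b$, i.e. the $\oplus$ point lies to the right of the $\ominus$ point, exactly the geometric relation of the $(0,1]$ reduction, and since the number of unmatched points is invariant under such axis reflections this change of coordinates is harmless. Because both affine maps are monotone and a bin is closed the instant it receives a small item, a large item always opens a fresh bin, and each arriving small item is placed by Best Fit into the open bin whose large item leaves the least slack among those it still fits; under the reflection above this is exactly the greedy rule $\MBF$ follows on sequences drawn from $(0,1]$. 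Consequently the matching decisions, and in particular the number $U$ of items left alone in a bin, are distributed identically to the unmatched count of the optimal up-right matching, so $E[U]=\Theta(\sqrt{n_t}\log^{3/4}n_t)$.

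Finally I would count bins. Each large item opens exactly one bin and each unmatched small item occupies a bin by itself, so writing $M$ for the number of matched pairs gives $\MBF(\sigma_t)=n_t-M$, while $U=n_t-2M$; eliminating $M$ yields $\MBF(\sigma_t)=n_t/2+U/2$. Taking expectations and substituting the bound on $E[U]$ gives $E[\MBF(\sigma_t)]=n_t/2+\Theta(\sqrt{n_t}\log^{3/4}n_t)$, as claimed. I expect the main obstacle to be the middle step: one must verify carefully that the fitting relation $s+\ell\le 1$ really transforms into the up-right relation (handling the reflection correctly) and that the monotonicity of the affine maps makes $\MBF$'s Best-Fit choices coincide with those of the $(0,1]$ process, so that the \emph{optimality} of the matching, and hence the exact value of $E[U]$, transfers. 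The remaining ingredients, namely uniformity of the points, balance of the labels, and the bin count, are routine.
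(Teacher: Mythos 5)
Your proposal is correct and follows essentially the same route as the paper's proof: map each class-$t$ item to a uniformly random labeled point in the unit square via the same affine maps, invoke the known $\Theta(\sqrt{n_t}\log^{3/4}n_t)$ bound on unmatched points for \MBF viewed as an (optimal) up-right matching algorithm, and recover the bin count from $\MBF(\sigma_t)=n_t/2+U/2$. The only difference is that you explicitly check that the fitting relation $s+\ell\le 1$ turns into the up-right relation (observing that the large item's coordinate must be reflected, since with the paper's stated map the condition reads $a+b\le 1$ rather than $a\le b$), a detail the paper leaves implicit.
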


\begin{proof}
Define an instance for up-right matching from $\sigma_t$ as follows: Let $x$ be the $i$th item of $\sigma_t$ $(1\leq i \leq n_t)$. If $x$ is small, plot a point with $\ominus$ label at position $ (x \times t(t+1) - t, t/n)$; otherwise, plot a point with $\oplus$ label at position $(x \times t(t+1) - (t^2-1),t/n)$. This way, the points will be bounded in the unit square. Since the item sizes are generated uniformly at random from the two intervals and the sizes of the intervals are the same, the point locations and labels are assigned uniformly at random. As a result, the number of unmatched points in the up-right matching solution by \MBF is expected to be $\Theta(\sqrt{n_t} \log^{3/4}n_t)$. The unmatched points are associated with the items in $\sigma_t$ which are packed as a single item in their bins by \MBF. Let $s$ denote the number of such items, hence $E[s] \in \Theta(\sqrt{n_t} \log^{3/4}n_t)$. Except these $s$ items, other items are packed with exactly one other item in the same bin. So we have $\MBF(\sigma_t) - s/2 = n_t/2$ which implies $E[\MBF(\sigma_t)] = n_t/2 + E[s]/2$. Since $E[s] \in \Theta(\sqrt{n_t} \log^{3/4}n_t)$, the statement of the lemma follows. 
\end{proof}

\begin{lemma} \label{OMMBF}
For any instance $\sigma$ of the bin packing problem, the cost of \ROM for serving $\sigma$ is no more than that of \MBF.
\end{lemma}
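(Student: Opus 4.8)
The plan is to show that \ROM and \MBF produce exactly the same set of matched (large, small) pairs, and that they differ only in how the leftover unmatched small items are handled, where \ROM is at least as economical because it reuses bins via the \NF strategy rather than spending one bin per leftover small item.

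First I would pin down the structure of the open bins in each algorithm. In both, the only bins ever kept open (and hence available to complete a match) are those holding a single large item: in \MBF this holds because a bin is closed the instant it receives a small item, and two large items cannot share a bin, so a large item always starts a fresh bin; in \ROM it is true by definition. I would then prove, by induction on the processed prefix of $\sigma$, that the collection of open single-large-item bins is identical in the two executions. In the inductive step a large item opens an identical new bin in both; a small item applies the same \BF rule over the same pool of open single-large-item bins, so it either matches the same bin (closed identically in both) or matches none. In the latter case the leftover small item is disposed of without touching that pool, namely a throwaway closed bin in \MBF versus the \NF list in \ROM. Hence the pools evolve identically, and in particular the set of matched pairs is the same in both algorithms.

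Next comes the bookkeeping. Let $L$ and $S$ be the numbers of large and small items in $\sigma$, and let $m$ be the common number of matched pairs. Both algorithms use $m$ bins for the matched pairs and $L-m$ bins for the unmatched large items, so these contributions coincide. The $S-m$ unmatched small items cost exactly $S-m$ bins in \MBF (one apiece), giving $\MBF(\sigma)=m+(L-m)+(S-m)$. In \ROM those same $S-m$ small items are packed by \NF into at most $S-m$ bins, so $\ROM(\sigma)=m+(L-m)+\NF(\text{leftover small items})\le m+(L-m)+(S-m)=\MBF(\sigma)$, which is the claim.

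The main obstacle is the coupling argument in the induction: one must argue convincingly that the divergent treatment of unmatched small items never feeds back into later matching decisions. The crucial point to nail down is that an unmatched small item is never subsequently paired with a large item in either algorithm, since large items always open their own bins and the \NF list in \ROM holds only small items. Consequently the pool of open single-large-item bins that drives the \BF matching is oblivious to the fate of the leftover small items, and once this is established the cost comparison follows immediately from the counting above.
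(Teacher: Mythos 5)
Your proposal is correct and follows essentially the same route as the paper's proof: both algorithms open a new bin per large item, match small items to companions identically via \BF, and differ only on companion-less small items, where \ROM's \NF strategy uses no more bins than \MBF's one-bin-per-item. The paper states this coupling in one line where you verify it by induction on the prefix, so your write-up is simply a more rigorous rendering of the same argument.
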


\begin{proof}
Both \ROM and \MBF open a new bin for each large item. Also, they treat small items which have companions in the same way, i.e., they place the item in the bin of the largest companion and close that bin. The only difference between \ROM and \MBF is in placing small items without companions where \ROM applies the \NF strategy while \MBF opens a new bin for each item. Trivially, \ROM does not open more bins than \MBF for these items. 
\end{proof}

\begin{lemma}\label{MBFMONO}
Removing an item does not increase the cost of \MBF.
\end{lemma}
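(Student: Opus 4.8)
The plan is to first pin down the exact structure of the packing that \MBF produces. Since any two items larger than $1/2$ cannot share a bin, each large item opens a fresh bin that holds a single large item and stays \emph{open}, while a small item $s$ (with $s\le 1/2$) is placed by the \BF rule into the open bin of highest level in which it fits, and that bin is then closed and never reopened. Reading the level of an open bin through its \emph{residual capacity} $r=1-\ell$, where $\ell>1/2$ is its large item (so $r<1/2$), the \BF rule says $s$ is matched to the open bin of smallest residual capacity that is still $\ge s$; if no such bin exists, $s$ opens and immediately closes a bin of its own, an \emph{unmatched} small. Every bin is therefore opened by either a large item or an unmatched small, so $\MBF(\sigma)=L(\sigma)+w(\sigma)$, where $L$ counts the large items and $w$ the unmatched smalls. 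Deleting a large item decreases $L$ by exactly one and deleting a small leaves $L$ fixed, so the lemma reduces to two bounds on the unmatched-small count: removing a small must not increase $w$, and removing a large must increase $w$ by at most one.

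I would prove both bounds through a single coupling invariant comparing the run on $\sigma$ with the run on the reduced sequence $\sigma'$. The two runs are identical until the deleted item $x$ is reached, and at that instant they diverge by exactly one open bin. The state I track is the pair $(R,w)$, where $R$ is the multiset of residual capacities of the currently open bins. I claim that after the divergence the two states always satisfy one of two relations: either their open-bin multisets are identical and the gap between their $w$-values is controlled, or one run's multiset is obtained from the other's by inserting a single surplus capacity $a$, with the $w$-values ordered appropriately. When the deleted $x$ is small the surplus bin sits on the $\sigma'$ side (the original consumed a bin matching $x$ that the reduced run still holds); when $x$ is large the surplus sits on the $\sigma$ side.

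The induction processes the common suffix one item at a time. A large arrival adds the same capacity to both multisets, so any surplus-bin relation is untouched. A small arrival $s$ is handled by comparing the surplus $a$ with $s$ and with $m$, the smallest capacity $\ge s$ in the common part of the multiset. If $a<s$ or $a\ge m$, both runs match the same common bin $m$ (or both fail), and the surplus $a$ survives unchanged; if instead $a\ge s$ and $a$ is the smallest fitting capacity, the surplus-carrying run consumes $a$ while the other matches $m$, so the surplus merely migrates from $a$ to $m$ -- or, when the other run has no fitting bin and must increment its $w$, the multisets become equal and the $w$-gap shifts by one in the permitted direction. Carrying this out in both directions yields $w(\sigma')\le w(\sigma)$ for a deleted small and $w(\sigma')\le w(\sigma)+1$ for a deleted large, which together with the behaviour of $L$ gives $\MBF(\sigma')\le\MBF(\sigma)$.

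The delicate part is the bookkeeping in the small-item case: I must verify that the \BF ``smallest fitting residual capacity'' selection makes the two runs pick the \emph{same} base bin except possibly for the single surplus bin, and that every transition keeps the invariant in one of its two allowed forms. The only transition that can move the unmatched-count gap adversely is the one where the surplus-carrying run matches using its surplus bin while the other run has no fitting bin; the crux is checking that this moves the gap by at most one and always in the direction permitted by the relevant bound -- which for a deleted small is guaranteed because the surplus run holds a superset of the other's bins and hence never fails when the other succeeds, and for a deleted large is exactly the one unit of slack allowed by $w(\sigma')\le w(\sigma)+1$.
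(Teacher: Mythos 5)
Your proof is correct, but it takes a genuinely different route from the paper's. The paper argues by \emph{reverse induction on the index of the removed item}: the base case is the last item; for a matched small item $x$ it observes that removing $x$ changes the cost exactly as removing the later small item $x'$ that slides into $x$'s bin, so the induction hypothesis (applied to the larger index of $x'$) finishes that case; for a large item $x$ with companion $x'$ it removes the pair (cost drops by exactly one) and then re-inserts $x'$, invoking the claim that adding a single small item increases the cost of \MBF by at most one unit. You instead decompose the cost as $\MBF(\sigma)=L(\sigma)+w(\sigma)$ and run a \emph{forward coupling} of the two executions, maintaining the invariant that the open-bin residual multisets are either equal or differ by exactly one surplus bin, with the unmatched counts ordered accordingly; the deletion cases then reduce to $w(\sigma')\le w(\sigma)$ (small deletion) and $w(\sigma')\le w(\sigma)+1$ (large deletion). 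Your case analysis of a small arrival against the surplus capacity $a$ and the common best fit $m$ is complete (fail/fail, both take $m$, surplus migrates from $a$ to $m$, or surplus consumed while the other run fails), and the directional checks are right: when the deleted item is small the surplus sits on the reduced side, whose bins form a superset, so the reduced run can never fail when the original succeeds; when it is large the single equalizing transition is the one unit of slack the bound permits, and it can occur only once since afterwards the runs coincide. What each approach buys: the paper's induction is shorter on the page, but it leans on sub-claims (the cost-equivalence of removing $x$ versus $x'$, and ``adding a small item costs at most one extra bin'') whose rigorous justification would require essentially the state-tracking you make explicit; your coupling is longer to set up but self-contained, mechanically checkable, and yields the sharper quantitative statements about the unmatched-small count as by-products.
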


\begin{proof}
Let $\sigma$ denote an input sequence and $n$ denote the length of $\sigma$. We use a reverse induction to show that removing the $(n-i)$th item ($0\leq i \leq n-1$) does not increase the cost of \MBF for serving $\sigma$. 
Note that removing the last item does not increase the cost of any algorithm and the base of induction holds. Assume the statement holds for $i = k+1$, i.e., removing any item from index $i \geq k+1$ does not increase the cost of \MBF. We show the same holds for $i=k$. Let $n_l$ denote the number of large items in sigma and $n_{ss}$ denote the number of \textit{single small} items, which are the small items which have no companion. For placing a single small item, \MBF opens a bin and closes the bin right after placing the item. 
For the cost of \MBF for serving $\sigma$ we have $\MBF(\sigma) = n_l + n_{ss}$.
Let $x$ denote the $k$th item in $\sigma$. We show removing $x$ does not increase the cost of \MBF. There are a few cases to consider. 

First, note that if $x$ is a small single item, removing it decreases the cost of \MBF by one unit. Since the packing of other items does not change, the inductive step trivially holds.
Next, assume $x$ is a small item which has a companion. Removing $x$ might create a space for another small item $x'$ in the bin of $x$. In case such an item does not exist (i.e., no other item replaces $x$ in its bin), the packing and consequently the cost of the algorithm do not change. Otherwise, $x'$ is placed in the bin which includes the companion of $x$ and closes that bin. Let $k'$ denote the index of $x'$ in the sequence and note than $k'>k$. Also, let $\sigma^{-a}$ denote a copy of $\sigma$ from which an item $a$ is removed. We have $\MBF(\sigma^{-x}) = \MBF(\sigma^{-x'}$), i.e., 
removing item $x$ changes the cost of \MBF in the same way that removing $x'$ does. By the induction hypothesis, removing $x'$ does not increase the cost of \MBF and we are done.

The only remaining cases is when $x$ is a large item. If $x$ does not have a companion, removing $x$ decreases the cost by one unit and we are done. Now assume $x$ has a companion $x'$ and let $\sigma^{--}$ denote the same sequence as $\sigma$ in which both $x$ and $x'$ are removed. As before, let $\sigma^{-x}$ denote a copy of $\sigma$ in which $x$ is removed. We have $\MBF(\sigma^{--}) = \MBF(\sigma) -1$. On the other hand, $\MBF(\sigma^{-x}) \leq \MBF(\sigma^{--}) + 1$. This is because adding a small item to a sequence does not increase the cost of \MBF by more than one unit; this holds because \MBF closes a bin as soon as a small item is placed in the bin. We conclude that $\MBF(\sigma^{-x}) \leq \MBF(\sigma)$, and the inductive step holds.

\end{proof}

Provided with the above lemmas, we prove the following theorem.

\begin{theorem} \label{majorAvg}
Let $\sigma$ be a sequence of length $n$ in which item sizes are selected uniformly at random from $(0,1]$. The expected wasted space of \HM for packing $\sigma$ is $\Theta(\sqrt{n} \log ^{3/4} n)$.
\end{theorem}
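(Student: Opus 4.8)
The plan is to bound $E[\HM(\sigma)]$ from above and below by $n/2 + \Theta(\sqrt{n}\log^{3/4}n)$ and then subtract the expected total item size, which equals $n/2$ since the items are uniform on $(0,1]$; the expected waste is exactly $E[\HM(\sigma)] - n/2$. I will work class by class, writing $\sigma_t$ for the subsequence of items of $\sigma$ that fall in class $t$ and $n_t = |\sigma_t|$. Each $n_t$ is Binomial with $\sum_t E[n_t] = n$, and, conditioned on $n_t$, the items of $\sigma_t$ are i.i.d.\ uniform on the two (equal-length, mutually reflected) intervals of class $t$.

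For the upper bound, the first step is the structural inequality $\HM(\sigma) \le \sum_{t=1}^{K}\ROM(\sigma_t)$. The only way \HM\ deviates from running \ROM\ independently on each class is to place some small items into already-open mature bins belonging to \emph{other} classes; such placements use existing space rather than opening new bins, and a monotonicity/exchange argument in the spirit of Lemma~\ref{MBFMONO} shows the net effect cannot increase the total number of bins. (Within a single class, a mature bin---holding a matched large--small pair, or a full \NF\ bin---has residual gap smaller than any item of that class, so per-class \HM\ already coincides with \ROM.) Combining this with Lemma~\ref{OMMBF} gives $\HM(\sigma) \le \sum_t \MBF(\sigma_t)$. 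Next I apply Lemma~\ref{MBFEx} to each class: conditioned on $n_t$, $E[\MBF(\sigma_t)] = n_t/2 + \Theta(\sqrt{n_t}\log^{3/4}n_t)$. The $K$-th class is not literally of the form covered by Lemma~\ref{MBFEx}, but its two intervals $(0,\frac{1}{K+1}]$ and $(\frac{K}{K+1},1]$ are again equal-length reflections of one another, so the same rescaling maps it to a unit-square up-right matching and the identical estimate holds. Summing over the constant number $K$ of classes and taking expectations over the $n_t$ (using concentration of $\sqrt{n_t}\log^{3/4}n_t$ around $\sqrt{E[n_t]}\log^{3/4}E[n_t]$ and $\sum_t E[n_t] = n$) yields $E[\HM(\sigma)] \le n/2 + O(\sqrt{n}\log^{3/4}n)$, hence the upper bound on the waste.

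For the lower bound I isolate the $K$-th class, whose very large items lie in $(\frac{K}{K+1},1]$ and very small items in $(0,\frac{1}{K+1}]$; after rescaling these form an instance of up-right matching on the unit square with $n_K = \Theta(n)$ points. The up-right matching lower bound $E[U_{n_K}] = \Omega(\sqrt{n_K}\log^{3/4}n_K)$ guarantees that in every valid matching $\Omega(\sqrt{n_K}\log^{3/4}n_K)$ of the very large items remain unmatched. The assignment sending each companioned very large bin to its first small item is itself a valid up-right matching (fit and arrival order are respected), so at least this many very large items receive no companion at all; such a bin is never declared mature and is therefore invisible to the cross-class \BF\ step, so it retains its full gap $1-\ell$. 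Since a constant fraction of these unmatched very large items have $1-\ell = \Omega(1/K)$, the surviving waste is $\Omega(\tfrac{1}{K}\sqrt{n_K}\log^{3/4}n_K) = \Omega(\sqrt{n}\log^{3/4}n)$ for the fixed constant $K$.

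The main obstacle is the lower bound, and specifically ruling out that the cross-class mature-bin mechanism erases the waste: I must show that the items responsible for the up-right matching deficit cannot be salvaged by later small items. The argument above does this by observing that an uncompanioned large bin never becomes mature and hence is never targeted by the shared \BF\ step; making it rigorous requires (i) the fact that \HM's companion assignment is a genuine up-right matching, so the combinatorial lower bound applies, and (ii) the robustness of the Leighton--Shor lower bound, namely that the unmatched points occur on the $\oplus$ (large) side and at typical positions, so that a constant fraction of the surviving gaps are of order $1/K$ rather than negligible. A secondary technical point is the extension of Lemma~\ref{MBFEx} to the $K$-th class and the passage from the conditional (fixed $n_t$) estimates to the unconditional expectation.
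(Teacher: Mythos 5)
Your upper bound follows the paper's route and is essentially right, with one ordering slip: you first claim $\HM(\sigma)\le\sum_t\ROM(\sigma_t)$, which implicitly requires that \ROM itself is monotone (removing the items that \HM diverts into mature bins must not increase \ROM's cost). Neither you nor the paper proves monotonicity of \ROM; the paper's chain is arranged precisely to avoid needing it, namely $\HM(\sigma)=\sum_t\ROM(\sigma_t^-)\le\sum_t\MBF(\sigma_t^-)\le\sum_t\MBF(\sigma_t)$, where $\sigma_t^-$ drops the items placed in mature bins, the first inequality is Lemma~\ref{OMMBF} applied to the reduced sequences, and the second is monotonicity of \MBF (Lemma~\ref{MBFMONO}). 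Reordering your two steps this way fixes the issue at no cost. Your remark that class $K$ is not literally covered by Lemma~\ref{MBFEx} but succumbs to the same rescaling is correct (a point the paper glosses over); your parenthetical that within a single class \HM already coincides with \ROM is false for class $K$ (a mature class-$K$ bin can still absorb further tiny class-$K$ items), but nothing in the argument depends on it.

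The genuine gap is in your lower bound --- which, to your credit, is a step the paper itself does not really carry out (the paper only compares with $E[\Opt(\sigma)]\ge n/2$, which yields the $O(\cdot)$ direction; the $\Omega(\cdot)$ direction cannot follow from a chain that only bounds \HM \emph{above} by \MBF). Your plan is the right one: an un-companioned large bin never becomes mature, is invisible to the cross-class \BF step, and so retains its gap; and \HM's within-class companion assignment respects arrival order and fit, hence is a feasible matching to which the Leighton--Shor bound applies, leaving $\Omega(\sqrt{n_t}\log^{3/4}n_t)$ large items un-companioned after discounting the $O(\sqrt{n_t})$ imbalance between the two labels. But you run this on class $K$, whose large items lie in $(\frac{K}{K+1},1]$, so the surviving gap $1-\ell$ can be arbitrarily small; to conclude you must assert that a constant fraction of the unmatched large items sit at ``typical positions'' with $1-\ell=\Omega(1/K)$. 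You flag this as ``robustness of Leighton--Shor,'' but it is unproven and genuinely dubious: under a best-fit style matching the unmatched $\oplus$ points are precisely the hard-to-match ones, i.e., plausibly those with sizes nearest $1$ and gaps nearest $0$, so this claim cannot be waved through. The repair is simple: run the identical argument on any class $t\le K-1$ instead. There the large items lie in $(\frac{t}{t+1},\frac{t+1}{t+2}]$, so \emph{every} un-companioned large bin has gap at least $\frac{1}{t+2}$ deterministically, the positional claim is not needed, and since $E[n_t]=\Theta(n)$ for constant $K$ you get expected waste $\Omega(\sqrt{n}\log^{3/4}n)$ outright. With that substitution your proof is complete, and it actually supplies the $\Omega(\cdot)$ direction that the paper's own proof leaves implicit.
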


\begin{proof}

Let $\sigma^-$ be a copy of $\sigma$ in which those items which are placed in mature bins are removed. 
Let $\sigma_1^-, \ldots, \sigma_K^-$ be the subsequences of $\sigma^-$ formed by items belonging to different classes of \HM. We have:
\begin{equation*}
\HM(\sigma) = \sum\limits_{t=1}^{K} \ROM(\sigma_t^-) \leq \sum\limits_{t=1}^{K} \MBF(\sigma_t^-) \leq \sum\limits_{t=1}^{K} \MBF(\sigma_t)
\label{eq:1}
\end{equation*}
The inequalities come from Lemmas \ref{OMMBF} and \ref{MBFMONO}, respectively. Consequently, by Lemma \ref{MBFEx}, we have:
\begin{equation*}
E[\HM(\sigma)] \leq \sum\limits_{t=1}^{K} \left(n_t/2 + \Theta(\sqrt{n_t} \log^{3/4}n_t) \right) = \frac{n}{2} + \Theta(\sqrt{n}\log^{3/4}n)
\label{eq:}
\end{equation*}

Note that the last equation only holds when $K$ is a constant. The expected cost of \OPT is no better than $n/2$ since half of items are expected to be larger than 1/2. Consequently, we have: 
\begin{equation*}
E[\HM(\sigma) - \OPT(\sigma)] \in \Theta(\sqrt{n}\log^{3/4}n)
\label{eq:4}
\end{equation*}
\end{proof}

It should be mentioned that, although the expected waste of \HM algorithms is $\Theta(\sqrt{n}\log^{3/4}n)$, there is a multiplicative constant involved in the expression which is a function of $K$. This implies that the rate of convergence to \BF is slower for larger values of $K$.

\section{Refined Harmonic Match}

In this section, we introduce a slightly more complicated algorithm, called \algo{Refined Harmonic Match} (\RHM), which has a better competitive ratio than \BF, \FF, and \HM while performing as well as them on average. 
In introducing \RHM, we have been inspired by the Refined Harmonic algorithm of \cite{LeeLee85}. 

Similar to \HM, \RHM divides items into a constant number of classes and treat items of each class separately. The classes defined for \RHM are the same as those of $\HM_K$ with $K=19$. 
The items which belong to class $k \geq 2$ are treated using \HM strategy, i.e., a set of mature bins are maintained. If an item fits in mature bins, it is placed there using \BF strategy; otherwise, it is placed together with similar items of its class using \ROM strategy. At the same time, the bins closed by \ROM subroutin are declared as being mature.

The only difference between \HM and \RHM in in packing items of class $1$, i.e., items in range $(1/3,2/3]$. \RHM divides items in this range into four groups $a = (1/3,37/96]$, $b= (37/96,1/2]$, $c = (1/2, 59/96]$, and $d = (59/96,2/3]$ (see Figure \ref{fig:intervalsRHM}). 
Similar to Refined Harmonic, to handle the bad sequences which result in lower bound of $T_\infty$ for competitive ratios of \HA and \MH, \RHM designates a fraction of bins opened by items of group $a$ to host future $c$ items. Note that the total size of a $c$ item and an $a$ item is no more than 1. However, to ensure a good average-case performance, \RHM should be more elaborate than Refined Harmonic. This is because it cannot treat $b$ apart from other items using a strategy like \NF as Refined Harmonic does. In what follows, we introduce an online algorithm called Refined Relaxed Online Match (\RRM) as a subroutine of \RHM that is specifically used for placing items of class 1.

\begin{figure}[!t]
\centering
\includegraphics[width=0.45\columnwidth, trim = 0mm 188mm 58mm 0mm, clip]{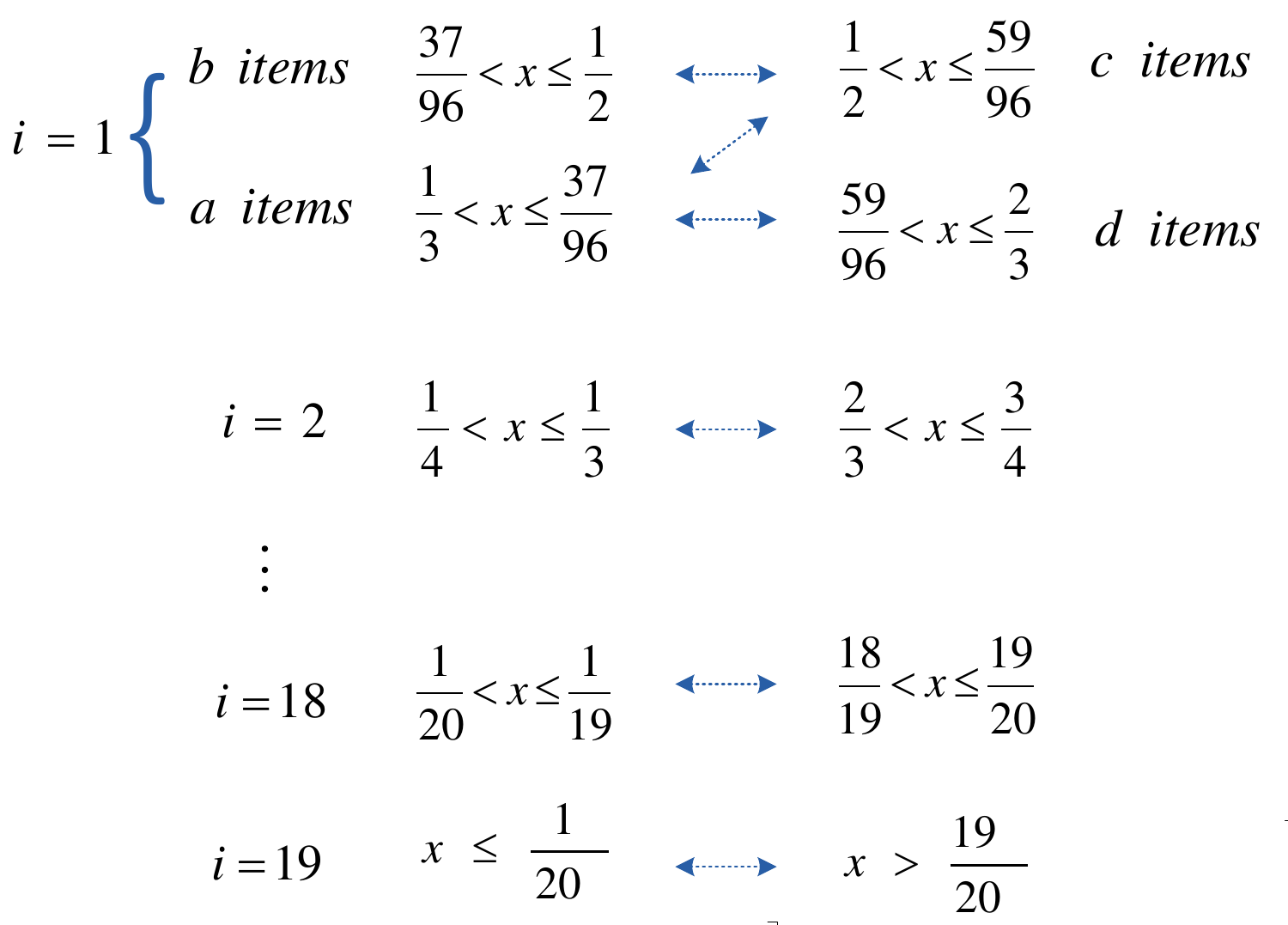}
\caption{The classes defined by \RHM. The algorithm matches items from intervals indicated by arrows.}
\label{fig:intervalsRHM}
\end{figure}

To place an item $x$ of the first class ($x \in (1/3,2/3]$), \RRM uses the following strategy. At each step of the algorithm, when two items of the first class are placed in the same bin, that bin is declared as being mature and will be used for placing small items of other classes. More precisely, it will be added to the set of mature bins maintained by the \HM algorithm applied for placing items in other classes.
If $x$ is a $d$-item, \RRM opens a new bin for $x$. If $x$ is a $c$ item, the algorithms checks whether there are bins which include a single $a$ item and are designated to have a $c$ item; in case there are such bins, $x$ is placed in one of those using \BF strategy. 
In case there is no such a bin, a new bin is opened for $x$. 

For $a$ and $b$ items (small items of class 1), \RRM uses the \BF strategy to select a bin with enough space which includes a single large item (if there is such a bin). This is particularly important to guarantee a good average-case behavior. If $x$ is a $b$ item, the algorithms checks the bin with the highest level in which $x$ fits; if such a bin includes a $c$ or a $b$ item, $x$ is placed there. Otherwise (when the selected bin does not exist or when it has an $a$ item), a new bin is opened for $x$. 
If $x$ is an $a$ item, the algorithm uses \BF strategy to place it into a bin with a $d$ or $c$ item. If no suitable bin exist, $x$ is placed into a bin with a single $a$ item (there is at most one such bin); if there is no such bin, a new bin is opened for $x$. 

When a new bin is opened for an $a$-item, the bin will be designated to either include a $c$ item or another $a$ item in the future. 
We define \textit{red bins} as those which include two $a$ items, or a single $a$ item designated to be paired with another $a$ item, and define \textit{blue bins} as those which include either a $c$ item together with an $a$ or a $b$ item, or a single $a$ item designated to be paired with a $c$ in future. When opening a new bin for an $a$ item, \RHM tries to maintain the number of red bins as close to three times the number of blue bins as possible. Namely, if the number of red bins is less than 3 times of blue bins, it declares the opened bin as a red bin to host another $a$ item in future; otherwise, the new open bin is declared as a blue bin to host a $c$ item in future. This way, the number of red bins is close to (but never more than) three times of blue bins. Note that, when many $b$ items are placed together with $c$ items, the resulting bins will be blue. In this case, the algorithm does not limit the number of blue bins unless it opens bins for $a$ items. Consequently, the number of red bins can be less than three times of blue bins. Algorithm \ref{algRRM} illustrates how \RRM works. 

\begin{center}
\scalebox{1}{
\begin{algorithm}[H]
  \caption{\RRM algorithm: Placing a sequence of items in range (1/3,2/3]\label{algRRM}}
			\SetKwInOut{Input}{input}
			\Input{A sequence $\sigma = \left\langle \sigma_1, \sigma_2, \ldots, \sigma_n \right\rangle$ of items in range (1/3,2/3]}
			\BlankLine
			\Let{$N_{a_1}, N_{a_2}, N_{aa}, N_{ab}, N_{ac}, N_b, N_{bc}$}{0} \\
			\For{$i \gets 1 \textrm{ to } n$}{
				\Switch {$\sigma_i$}{
						\uCase {$d$ item:}{
									open a new bin for $\sigma_i$ 
						}
						\uCase {$c$ item:}{
									\uIf{$N_{a_1} >0$}
									{
										Use \BF to place $\sigma_i$ in a bin with an $a$ item 
										$N_{a_1} \gets N_{a_1} -1$; $N_{ac} \gets N_{ac} +1$\\
										\small{\textit{\{$N_{a_1}$ is the number of bins with a single $a$ item which are designated to have a $c$ item\}}}
									}
									\lElse
									{	open a new bin for $\sigma_i$; $N_c \gets N_c + 1$}
	
						}
						\uCase {$b$ item:}{
									select the bins with a $c$ item which have enough capacity for $\sigma_i$\\
									\uIf {there is a selected bin}
												{place $\sigma_i$ into the bin with the highest level among the selected bins;\\
												$N_{bc} \gets N_{bc} + 1$; $N_c \gets N_c -1$}
									\uElseIf {$N_b =1$}
									{place $\sigma_i$ into the bin with a single $b$ item; $N_b \gets 0$}
									\lElse
									{open a new bin for $\sigma_i$; $N_b \gets 1$}
						}
				
					\uCase{$a$ item:}{
							select the bins with a $d$ item which have enough capacity for $\sigma_i$\\
							\uIf {there is such a bin}
							{place $\sigma_i$ into the bin with the highest level among those bins}
							\uElseIf {$N_c>0$}
					 {	place $\sigma_i$ into the bin with the largest $c$ item; $N_{ac} \gets N_{ac}+1$; $N_{c} \gets N_{c}-1$}
							\uElseIf {$N_{a_2}=1$}
							{place $\sigma_i$ into any with a single $a$ item; $N_{a_2} \gets 0$; $N_{aa} \gets N_{aa}+1$}
							\uElse
						 {place $\sigma_i$ in a new (empty) bin \\
							\small{\textit{\{compare the number of red bins with 3 times number of blue bins\}}} \\
							\uIf {$N_{aa} < 3(N_{ac} + N_{a_1} + N_{bc})$}
						 {$N_{a_2} \gets 1$; \small{\textit{\{declare the opened bin as a red bin (an $a_2$-bin) \}\\}} } 
							\uElse
							{$N_{a_1} \gets N_{a_1} + 1$; \small{\textit{\{declare the opened bin as a blue bin (an $a_1$-bin) \}}} } 
						}
						}
				}
			
}
\end{algorithm}
}
\end{center}

\subsection{Worst-Case Analysis}
In this section, we provide an upper bound of 1.636 for the competitive ratio of \RHM. We start by introducing some notations. Bins in a packing by \RRM can be divided into the following groups: \emph{$d$-bins} which include a $d$-item (might also include an $a$ item), \emph{$c$-bins} (resp. \emph{$b$-bins)} which include a single $c$ (resp. $b$) item, \emph{$a_1$-bins} (resp. \emph{$a_2$-bins}) which include a single $a$ item and are designated to include a $c$ (resp. an $a$) item in future, \emph{$bb$-bins} (resp. \emph{$aa$-bins}) which include two $b$ (resp. $a$) items, 
\emph{$ac$-bins} which include an $a$ item and a $c$ item, and \emph{$bc$-bins} which include a $b$ item and a $c$ item. Note that there is at most one $b$-bin and one $a_2$-bin (otherwise, two of those bins form a $bb$-bin or $aa$-bin, respectively).
We use respectively capital $N$ and lower case $n$ to refer to the number of bins and items: $N_\alpha$ denotes the number of bins of type $\alpha$, e.g., $N_{ac}$ indicates the number of $ac$-bins. Similarly, $N_{red}, N_{blue}$ denote the number of red and blue bins in the packing. Note that $N_{red} = N_{aa} + N_{a_2}$ and $N_{blue} = N_{ac} + N_{bc} + N_{a_1}$. We use $n_x$ to denote the number of items of type $x$ ($x \in \{a,b,c,d \}$). 
Moreover, we use $n_{b_1}$ to denote the number of $b$ items which are packed with a $c$ item ($n_{b_1} = N_{bc}$) and $n_{b_2}$ to denote the number of other $b$ items ($n_{b_1} +n_{b_2} = n_b$). Counting the number of $a$ and $b_1$ items we get $n_a + n_{b_1} = 2N_{aa} + N_{ac} + N_{a_1} + N_{a_2} + N_{bc}$. Since $N_{a_2}\leq 1$, by definition of red and blue bins, we get

\begin{center}
\begin{equation}
n_a+n_{b_1}-1 \leq 2N_{red} + N_{blue} \leq n_a+n_{b_1}
\label{eq:6star}
\end{equation}
\end{center}

We refer to the above equation in a few places in our analysis. 
Since \RHM uses the same strategy as \HM for placing items in classes $k\geq 2$, and \HM never opens more bins tha \HA does (Theorem \ref{th1}), we can prove the following lemma.

\begin{lemma} \label{lemTotCost}
The cost of \RHM for serving items of classes $k\geq 2$ is upper bounded by 
\begin{center}
\begin{equation*}
\RHM(\sigma) \leq \RRM(\sigma_{cl_1}) + \sum\limits_{k=2}^{18} \frac{n_k}{k+1} + 20W'/19 + 20
\end{equation*}
\end{center}
in which $\sigma_{cl_1}$ is the subsequence formed by items of class 1, $n_k$ is the number of items in class $k$, and $W'$ is the total size of items in class 19 (the last class).
\end{lemma}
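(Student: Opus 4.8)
The plan is to bound the cost of \RHM by separately accounting for the three sources of bins: bins used for class 1 (handled by \RRM), bins used for the intermediate classes $k = 2, \ldots, 18$, and bins used for the final class 19. Since \RHM treats each class separately, the total cost is simply the sum of the costs over all classes, so I would write $\RHM(\sigma) = \RRM(\sigma_{cl_1}) + \sum_{k=2}^{18} \HM(\sigma_k) + \HM(\sigma_{19})$ and then bound each of the last two groups of terms. The class-1 term is left as $\RRM(\sigma_{cl_1})$ exactly, so no work is needed there; this is the subsequence analysed later in the worst-case section.

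For the intermediate classes $2 \leq k \leq 18$, I would invoke Theorem \ref{th1}, which gives $\HM_K(\sigma) \leq \HA_{K+1}(\sigma)$, so that items of class $k$ are packed no worse than by the corresponding Harmonic class. Items in class $k$ of \HM lie in $(\frac{1}{k+1}, \frac{1}{k}] \cup (\frac{k}{k+1}, \frac{k+1}{k+2}]$, and I would use the standard Harmonic accounting: a Next-Fit list receiving items from an interval $(\frac{1}{k+1}, \frac{1}{k}]$ packs at least $k$ of them per bin (up to boundary effects), so the number of bins is at most $\frac{n_k}{k+1}$ plus a small additive constant absorbing the last, possibly-underfull bin of each list. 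Summing over the $17$ classes $k = 2, \ldots, 18$ contributes the term $\sum_{k=2}^{18} \frac{n_k}{k+1}$ together with a bounded number of partially-filled bins, which I would fold into the final additive constant.

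The last class, $k = 19$, holds the very small and very large items (sizes in $(0, \frac{1}{20}] \cup (\frac{19}{20}, 1]$). Here I would switch from counting items to summing sizes: by the Next-Fit / Harmonic guarantee, each bin of a Next-Fit list receiving items of size at most $\frac{1}{20}$ is filled to level more than $1 - \frac{1}{20} = \frac{19}{20}$ before a new bin is opened, so the number of such bins is at most $\frac{20}{19}$ times the total size $W'$ of class-19 items, plus one for the final underfull bin. This yields the $20 W'/19$ term. Combining the one leftover bin per Next-Fit list across all $19$ classes (plus the matched/mature-bin bookkeeping) gives the overall additive slack, which I would bound crudely by the stated constant $20$.

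The main obstacle I anticipate is the careful handling of the additive constants and boundary bins: each class contributes at most a constant number of partially-filled bins (the final open bin of each Next-Fit list, and the at-most-one unmatched bin per class from the \ROM matching step), and I must verify that summing these over all classes together with the mature-bin accounting stays within the stated slack of $20$. The substantive inequalities (the $\frac{1}{k+1}$ per-item bound and the $\frac{20}{19}$ per-unit-size bound) follow directly from the Next-Fit fill guarantee and Theorem \ref{th1}; the delicate part is ensuring the constant-sized error terms from all $18$ non-trivial classes aggregate correctly into the clean expression above.
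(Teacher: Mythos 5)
Your overall architecture (keeping the class-1 cost as $\RRM(\sigma_{cl_1})$, Harmonic-style occupancy counting for classes $2$--$18$, size-based counting for class $19$, constants folded in at the end) matches the paper's, but your first step glosses over the one genuinely non-trivial point of the lemma, and your occupancy count has an off-by-one that, as written, does not yield the claimed bound.

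The decomposition is the real gap. You assert that \RHM ``treats each class separately'' and write $\RHM(\sigma) = \RRM(\sigma_{cl_1}) + \sum_{k=2}^{18} \HM(\sigma_k) + \HM(\sigma_{19})$ as an identity. This is false: \RHM does \emph{not} treat classes separately, because small items of classes $k \geq 2$ may be placed by \BF into \emph{mature} bins created by other classes --- in particular into bins closed by \RRM, which is precisely why those bins are declared mature. So the per-class costs do not simply add, and even the inequality $\RHM(\sigma) \leq \RRM(\sigma_{cl_1}) + \sum_k \HM(\sigma_k)$ is not easy to establish directly: comparing \RHM's run on $\sigma$ with an isolated run of \HM on $\sigma_k$ means comparing two processes whose internal states diverge as soon as a single item is diverted to a cross-class bin, and making that comparison rigorous would essentially require a monotonicity property of \HM itself --- which the paper explicitly says is not known (see its Discussion section). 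The paper's proof avoids this trap: it attributes each bin to the class of the item that opened it, observes that class-$k$ items absorbed by mature bins open no bins and may therefore be regarded as \emph{removed}, notes that the remaining class-$k$ items are packed exactly as Harmonic would pack them (one bin per large item, \NF for the small ones), and then invokes the monotonicity of \emph{Harmonic} (Lemma \ref{HAMono}, which rests only on \NF's monotonicity) to conclude that this removal cannot increase the Harmonic bound. Your plan mentions ``mature-bin bookkeeping'' only as an additive-constant issue to verify at the end; it is not a constant-size effect but the step where the lemma could fail, and it needs the removal-plus-monotonicity argument (or an equivalent one) stated explicitly.

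Separately, the arithmetic: small items of class $k$ in \HM/\RHM lie in $(\frac{1}{k+2}, \frac{1}{k+1}]$, not in $(\frac{1}{k+1}, \frac{1}{k}]$ as you write. With your interval, Next Fit packs only $k$ items per closed bin, giving roughly $n_k/k$ bins --- weaker than the claimed $n_k/(k+1)$, and enough to break the weighting argument of Theorem \ref{mainRHM} downstream, whose weights are $1/(k+1)$. With the correct interval each \NF bin receives exactly $k+1$ items, giving $n_k/(k+1) + 1$ bins per class, which is what the lemma states. Your class-19 treatment (fill level at least $19/20$, hence at most $20W'/19$ bins plus one) and the aggregation of the per-class leftover bins into the additive constant do match the paper.
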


\begin{proof}
Since \RHM performs similarly to \HM for placing items of class $k\geq 2$, the proof of Theorem \ref{th1} can be applied to state that \RHM does not open more bins than Harmonic algorithm does for placing these items. Harmonic places $k+1$ items of class $k$ in the same bin ($2 \leq k \leq 18$); hence, it opens at most $\frac{n_k}{k+1} +1$ bins for items in such class. The empty space in any bin assigned to items of the last class (class 19) is at most 1/20 since the size of items in this class is no more than 1/20. Hence, the number of opened bins for this class is at most $20W'/19 + 1$. Note that some items in classes $k \geq 2$ might be placed in the mature bins maintained by \HM (including the bins released by \RRM). When comparing with Harmonic algorithm, we can think of these items as being removed from the Harmonic packing. Since the Harmonic is monotone (Lemma \ref{HAMono}), removing these items does not increase the number of bins. Hence, the claimed upper bound still holds. 
\end{proof}

\begin{theorem}\label{mainRHM}
The competitive ratio of \RHM is at most $373/228 < 1.636$.
\end{theorem}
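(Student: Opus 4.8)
The plan is to convert the structural bound of Lemma~\ref{lemTotCost} into a charging argument: assign to each item a share of the bins for which it is responsible, show that \RHM's cost is at most the total charge (up to an additive constant), and then maximise the charge collectible inside a single unit bin of an optimal packing. Lemma~\ref{lemTotCost} already disposes of the classes $k\ge 2$ through the harmonic terms $\sum_{k=2}^{18} n_k/(k+1)+20W'/19$, so the entire difficulty is concentrated in the class-$1$ term $\RRM(\sigma_{cl_1})$.

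The first step is to rewrite $\RRM(\sigma_{cl_1})$ in terms of item counts and a handful of bin-type counts. Adding up the nine bin types of \RRM and substituting the conservation identity $n_c=N_c+N_{ac}+N_{bc}$ cancels the $ac$- and $bc$-bins, leaving
\begin{equation*}
\RRM(\sigma_{cl_1}) = n_c+n_d+N_{aa}+N_{bb}+N_{a_1}+O(1),
\end{equation*}
where the additive constant absorbs the at-most-one $b$-bin and $a_2$-bin. This identity is the heart of the matter: a small item placed beside a large one costs nothing beyond that large item, so only the large items $n_c,n_d$ and the ``unmatched'' structures $N_{aa},N_{bb},N_{a_1}$ carry any cost.

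The second step bounds the unmatched structures against the item counts. Here I would combine the $a$-conservation identity $2N_{aa}+N_{ac}+N_{a_1}=n_a+O(1)$ (the free $a$-riders inside $d$-bins only reduce the cost, so I give them to the adversary) with \eqref{eq:6star} and the three-to-one balancing rule $N_{red}\le 3N_{blue}$ built into \RRM. The key point is that I must not decouple $N_{ac}$: this term counts exactly the $a$-items absorbed by $c$-items, and it is what distinguishes the pure-$a$ regime, where every reserved blue bin is wasted and $a$-items are charged at the high rate $4/7$, from the mixed regime, where reservations are productively filled. Charging both the full wasted-reservation rate and the pairing bonus at once would overcount and push the estimate past $T_\infty$; keeping the coupling is essential. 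The $b$-items are controlled by $N_{bb}\le n_b/2$.

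The final step lower-bounds \OPT by the total volume and by the number of items larger than $1/2$, and maximises the ratio of the resulting cost estimate to this lower bound over all admissible size profiles and all ways an optimal bin can be filled. The extremal configuration is the refined analogue of the $T_\infty$ witness --- a $c$-item just above $1/2$, an $a$-item just above $1/3$, and a harmonic tail $1/7^{+},1/43^{+},\dots$ --- for which the refinement lowers the effective cost of the $a$-item below the harmonic value $1/2$, but only for the fraction of $a$-items the three-to-one reservation can accommodate. I expect this coupled optimisation to be the main obstacle: one must verify that the adversary cannot simultaneously force many wasted reservations and many productive $a$--$c$ pairings, and that over all optimal bin patterns mixing $a,b,c,d$ with the smaller classes the maximum is attained at the balanced configuration --- the group boundaries $37/96$ and $59/96$ being calibrated precisely so that the two competing extremal bins tie at $373/228<1.636$.
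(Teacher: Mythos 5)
Your opening accounting identity is correct, and your instinct that the $a$--$c$ coupling must not be discarded is sound, but the proposal has a genuine gap: it is missing the one structural fact on which the whole theorem rests, namely that \RRM never allows a $c$-bin and an $a_1$-bin to coexist (a $c$ item is absorbed by an $a_1$-bin whenever one exists, and an $a$ item is absorbed by a $c$-bin whenever one exists), so the \emph{final} packing has either $N_c=0$ or $N_{a_1}=0$. This is a temporal property of the algorithm, and it is not implied by any of the static counting constraints you allow yourself ($n_c=N_c+N_{ac}+N_{bc}$, the $a$-conservation identity, Equation~\ref{eq:6star}, the three-to-one rule, $N_{bb}\le n_b/2$). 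Concretely, the profile
\[
N_{aa}=\tfrac{3}{7}n_a,\qquad N_{a_1}=\tfrac{1}{7}n_a,\qquad N_{ac}=N_{bc}=0,\qquad N_c=n_c
\]
satisfies every one of those constraints exactly (with $n_{b_1}=0$), yet under your charging scheme each $a$ item pays $4/7$, each $c$ item pays $1$, and a class-$5$ item pays $1/6$; an optimal bin holding a $c$ item just above $1/2$, an $a$ item just above $1/3$, and a class-$5$ item just above $1/7$ then collects $1+\frac{4}{7}+\frac{1}{6}=\frac{73}{42}\approx 1.74$. So the optimisation you describe, run over all counting-feasible profiles, returns a value above $1.7$, not $373/228$. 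Moreover, the invariant you say must be verified --- that the adversary cannot force many wasted reservations \emph{and many productive pairings} --- is the wrong one: productive pairings ($N_{ac}$) are harmless; what must be excluded is wasted reservations ($N_{a_1}>0$) coexisting with \emph{unmatched} $c$ items ($N_c>0$), and that exclusion is exactly the algorithmic lemma your framework cannot see.

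For comparison, this dichotomy is where the paper's proof begins, and it licenses a two-case analysis with two \emph{different} weight functions instead of one coupled optimisation. If the final packing has no $c$-bin, then $c$ items get weight $0$ and $a,b$ items weight $4/7$; this case needs a second temporal argument (after the last $a_1$-bin is opened, no further blue bins appear) to turn the three-to-one rule into $N_{red}+N_{blue}\le\frac{4}{7}(n_a+n_{b_1})+O(1)$. If the final packing has no $a_1$-bin, then $c$ items get weight $1$ but $a$ items only $3/7$. The bound $373/228$ arises only in the second case, from the optimal bin with contents $1/2^{+},\,1/4^{+},\,1/5^{+}$ plus a tail of volume $1/20$, whose weight is $1+\frac{1}{3}+\frac{1}{4}+\frac{1}{19}=\frac{373}{228}$; it is not a tie between two extremal bins calibrated by the thresholds $37/96$ and $59/96$, and your guessed extremal configuration $(1/2^{+},1/3^{+},1/7^{+},1/43^{+},\dots)$ evaluates to only about $1.62$ under the correct weights. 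In short, your skeleton (Lemma~\ref{lemTotCost} plus a per-bin weight maximisation against \opt) matches the paper's, but the mutual-exclusion property and the case split it enables are the core of the proof, not a detail to be absorbed into an optimisation, and they are absent from the proposal.
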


\begin{proof}

\RRM is defined in a way that no $c$-bin and $a_1$-bin can be open at the same time. We consider the following two cases based on the final packing of \RRM for the subsequence formed by items of type 1: 

\begin{itemize}
	\item Case 1: there is no $c$-bin in the final packing, while there is at least one $a_1$-bin in the packing.
	\item Case 2: there is no $a_1$-bin in the final packing.
\end{itemize}

We prove the theorem for these two cases separately.

\textbf{Case 1:} Assume there is no $c$-bin in the final packing, while there is at least one $a_1$-bin in the packing. Let $x$ be the last $a$ item for which an $a_1$-bin is opened. We claim that no blue bin is added to the packing after placing $x$. Blue bins are opened by $a$ or $c$ items. A new blue bin cannot be opened by a $c$ item as such a $c$ item should have been placed in one of the existing $a_1$ bins. Also, a new blue bin cannot be opened by an $a$ item since that results in a bin with a single $a$ item
; this contradicts $x$ being the last item for which an $a_1$ bin is opened. So, the number of blue bins does not increase after placing $x$. At the time of placing $x$, the number of red bins is no less than three times the number of blue bins; otherwise, the bin opened for $x$ would have been declared as a red bin (i.e., an $a_2$-bin). So, we have $-3 \leq N_{red}-3N_{blue} \leq 3$. Using Equation \ref{eq:6star} we get:
\begin{center}
\begin{align*}
 N_{red} + N_{blue} \leq 4n_a/7 + 4n_{b_1}/7 +11/7 
\end{align*}
\end{center}

And for the total cost of \RRM we will have:

\begin{center}
\begin{align*}
\RRM(\sigma_{cl_1}) = &  N_d + N_{ac} + N_{bc} + N_b + N_{bb} + N_a + N_{a'} + N_{aa}  \\
& \leq n_d + n_{b_2}/2 + N_{blue} + N_{red}  \\
& \leq n_d + n_{b_2}/2 + 4n_a/7 + 4n_{b_1}/7 + 11/7\\
& \leq  n_d + 4n_b/7 + 4n_a/7 + 11/7 &
\end{align*}
\end{center}

Plugging this into the upper bound given by Lemma \ref{lemTotCost}, we get:

\begin{center}
\begin{equation}
\RHM(\sigma)  \leq \sum\limits_{k=2}^{19} \frac{n_k}{k+1} + 20W'/19 + n_d + 4n_b/7 + 4n_a/7 + 22 
\end{equation}
\label{eq:111}
\end{center}

To further analyze the algorithm, we use a weighting function similar to that of \cite{LeeLee85}. We define a weight for each item so that the total weight of items in a sequence, denoted by $W(\sigma)$, becomes an upper bound for the cost of \RHM. At the same time, we show that the total weight of any set of items which fit in a bin is at most 1.63; this implies that the cost of \opt for serving $\sigma$ is at least $W(\sigma)/1.63$. Consequently, the ratio between the costs of \RHM and \OPT is at most 1.63. 

We define a weight for each item in the following manner. Small items in class $k (2\leq k \leq 18)$ have weight $1/(k+1)$. The weight of a small item $x$ of class $19$ is $20x/19$ . The weight of $d$ items and items larger than 2/3 is 1; the weight of $c$ item is 0, and the weight of $b$ and $a$ items is 4/7. This way, as the above Inequality \ref{eq:111} suggests, the total weight of items in a sequence is an upper bound for the cost of \RHM. Next, we study the maximum weight of items in a bin of \opt. Let $\beta_1,\beta_2, \ldots, \beta_t$ denote the set of items in a bin of \opt so that $\beta_1 \geq \beta_2 \geq \ldots \geq \beta_t$. Let $W_{opt}$ denote the total weight of items in such a bin, i.e., $W_{opt} = \sum\limits_{i=1}^t \omega(\beta_i)$. We claim that $W_{opt} < 1.63$, which implies the competitive ratio is upper bounded by $W_{opt}$. The \textit{density} of any item, defined as the ratio between the weight and the size of the item, is at most $\frac{4/7}{1/3} = 12/7<1.72$ for $b$-items. For $a$ items, the density is at most $\frac{4/7}{37/96} < 1.48$. For items smaller than $b$ items, the density decreases as the class increases (from at most $4/3$ for items of class 2 to at most $20/19$ for items in classes 18 and 19). The density of large items is at most $96/59 < 1.63$. 

To prove the claim, we do a case analysis on the value of $\beta_1$: I) assume $\beta_1$ has a size larger than 59/96, i.e., it is a $d$ item or larger. If $\beta_2$ is an $a$ or a $b$ item, we will have $\beta_3 + \ldots + \beta_t \leq 5/96<1/19$. So, all other items belong to classes 18 or 19 and their density is at most 20/19. Hence, the total weight of items in the bin will be at most $1+4/7+5/96 \times 20/19 < 1.63$. If $\beta_2$ is smaller than or equal to 1/3, the total size of all items except $\beta_1$ is at most 37/96 and their density is at most 4/3; the total weight will be $1 + 37/96 \times 4/3 < 1.52$. II) Assume $\beta_1$ is a $c$ item, i.e., its weight is 0. The total sizes of other items in the bin (all items except $\beta_1$) is at most 1/2 ad their density is upper bounded by 12/7. Hence, the total weight of items in the bin will be $1/2 \times 12/7 <1$. III) Assume $\beta_1$ is a $b$ item. Assume $\beta_2$ is a $b$ or an $a$ item; so, the total size of all other items is at most 27/96, while their density is at most 4/3 (note that they belong to class 2 or higher). Hence, the total weight of items in the bin will be $4/7+4/7+27/96\times 4/3 <1.52$. Next, assume $\beta_2$ is smaller than $a$ items; the total size of all items (except $\beta_1$) wil be at most 59/95 while their weight is at most 4/3. The total weight of items in a bin will be at most $4/7+59/96\times 4/3 < 1.4$. 
IV) Assume $\beta_1$ is an $a$ item. Assume $\beta_2$ is also an $a$ item; so, the total size of all other items is at most 1/3, while having a density of at most 4/3 (note that they belong to class 2 or higher). Hence, the total weight of items in the bin will be $4/7+4/7+1/3\times 4/3 <1.59$. Next, assume $\beta_2$ is smaller than $a$ items; the total size of all items (except $\beta_1$) will be at most 2/3 while their weight is at most 4/3. The total weight of items in a bin will be at most $4/7+2/3\times 4/3 < 1.46$. V) Finally, consider $\beta_1$ is smaller than 1/3; hence, the density of all items is at most 4/3. Consequently, the total weight of items in the bin is at most 4/3. 

To summarize, the total cost of \RHM for serving a sequence $\sigma$ is no more than the total weight of items in $\sigma$, denoted by $W(\sigma)$. At the same time, the total weight of items in a bin by \opt is at most 1.63 which implies that the cost of \opt for serving $\sigma$ is at least $W(\sigma)/1.59$. We conclude that the competitive ratio of \RHM is at most 1.59 in this case.

\ \\ 

\textbf{Case 2:} Assume there is no $a_1$-bin in the packing. 
For the cost of \RRM for serving $\sigma_{cl_1}$, we have:
\begin{align}
\sigma_{cl_1}& = N_{d} + N_{c} + N_{ac} + N_{bc} + N_{bb} + N_{aa} + N_{a_2}  \nonumber \\ 
& \leq n_d + n_c + n_{b_2}/2 + N_{red} + 1 \nonumber
\label{eq:168}
\end{align} 

Recall that the algorithm ensures that $N_{red} \leq 3 N_{blue} + 3$. By Equation \ref{eq:6star}, we get 
$N_{red} \leq 3n_a/7+3n_{b_1}/7+3/7$. Plugging this into the above inequality, we will get:

\begin{align*}
\RRM(\sigma_{cl_1}) & \leq n_d + n_c + n_{b_2}/2 + 3n_a/7 + 3n_{b_1}/7 + 2 \nonumber \\
&< n_d + n_c + n_b/2 + 3n_a/7  + 2.
\end{align*} 

By Lemma \ref{lemTotCost}, for the total cost of \RHM, we will have:

\begin{center}
\begin{equation}
\RHM(\sigma) \leq \sum\limits_{k=2}^{18} \frac{n_k}{k+1} + 20W'/19 + n_d + n_c + n_b/2 + 3n_a/7 + 23 
\end{equation}
\label{eq:117}
\end{center}


Similar to Case 1, we use a weighting technique. For all items, except for $a$, $b$, and $c$ items, the weights are defined similar to Case 1. For $a$, $b$, and $c$ items, the weights are respectively $3/7$, $1/2$, and $1$. As Inequality \ref{eq:117} suggests, this definition for weights ensures that the total weight of items is an upper bound for the cost of \RHM. As before, we study the maximum weight of a bin in the packing of \opt; let $\beta_1\geq \beta_2 \geq \ldots \geq \beta_t$ be the items in such a bin and $W_{opt}$ be their total weight. We claim that $W_{opt} < 1.63$. Note that the density of $d$ items and larger items are at most $96/59<1.63$, the density of $c$, $b$ and $a$ items are respectively upper bounded by 2, 1.3, and 1.29, and the density of items smaller than $1/3$ which belongs to class $k\geq 2$ is at most $\frac{k+2}{k+1}$ except the last class (class 19) for which the density is at most $20/19$.

We do a case analysis as before. First, assume $\beta_1$ is not a $c$-item. In this case, the density of all items, and consecutively their total weight, is less than 1.63. Next, assume $\beta_1$ is a $c$ item.
Now, if $\beta_2$ is a $b$ item, the total weight of other items will be at most $11/96<1/8$. Hence, these items belong to class 7 or higher and their density is at most 9/8. The total weight of items in the bin will be $1+1/2 +11/96 \times 9/8 \leq 1.62$. Next, assume $\beta_2$ is an $a$ item; the total weight of other items will be at most 1/6. These items belong to class 5 or higher and their density is at most 7/6. Hence, the total weight of items will be at most $1+ 3/7 + 1/6 \times 7/6 <1.63$. Next, assume $\beta_2$ belongs to class 2; the total weight of other items will be at most 1/4. Now, if $\beta_3$ belongs to class 3, the weight of other items will be at most 1-1/2-1/4-1/5=1/20; so they belong to the last class and their density is at most 20/19. The total weight of items will be at most $1 + 1/3 + 1/4 + 1/20 \times 20/19 = 373/228 \approx 1.636$. If $\beta_3$ belongs to class 4 or higher, its density will be at most 6/5, and the total weight of items in the bin will be at most $1+1/3+1/4\times 6/5 < 1.634$. Finally, assume $\beta_2$ belongs to class 3 or higher; so, all items except $\beta_1$ have a density of at most 5/4. The total weight of items will be at most $1+ 1/2 \times 5/4 = 1.625$. 

To summarize, the total weight of items in a bin in \opt's packing is at most 373/228 which implies that the cost of \opt for serving $\sigma$ is at least $228 W(\sigma)/373$. Recall that the cost of \RHM for serving $\sigma$ is upper bounded by $W(\sigma)$. We conclude that the competitive ratio of \RHM is at most 373/228 in this case.

\end{proof}

\subsection{Average-Case Analysis}\label{khar}
We show that the average-case performance of \RHM is as good as \BF, \FF, and \HM. Except the following lemma, other aspects of the proof are similar to those in Section \ref{avgHmSection}.

\begin{lemma}\label{lemRRM}
For any instance $\sigma$ of the bin packing problem in which items are in range $(1/3,2/3]$, the cost of \RRM for serving $\sigma$ is no more than that of Matching Best Fit (\MBF).
\end{lemma}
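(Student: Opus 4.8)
The plan is to exploit the fact that every item lies in $(1/3,2/3]$, so no bin of any packing can hold three items (their total would exceed $1$); hence for \emph{any} algorithm run on $\sigma$ the number of opened bins equals the number of items, $|\sigma|$, minus the number of \emph{paired} bins (bins holding exactly two items). It therefore suffices to prove that \RRM\ creates at least as many paired bins as \MBF. I first record which pairings each algorithm can form. Since two large items (both $>1/2$) never fit together, and a $b$-item never fits with a $d$-item, \MBF's pairs are exactly the (large, small) pairs in which the small arrives after the large: its closing Best-Fit rule keeps an unmatched large open until some later small is Best-Fit-assigned to it, while every small is closed on arrival. \RRM\ forms the same kind of large--small pairs ($ad$-, $ac$-, and $bc$-bins) and \emph{in addition} small--small pairs ($aa$- and $bb$-bins).

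Because each \RRM\ pair that is not of the small--small type still consumes one large item, the number of \RRM\ pairs is at least the number of large items it matches, i.e. $n_c+n_d$ minus the large items \RRM\ leaves unpaired; likewise the number of \MBF\ pairs equals $n_c+n_d$ minus the large items \MBF\ leaves unpaired. Consequently the whole lemma reduces to the single inequality that \RRM\ leaves \emph{no more} large items unpaired than \MBF\ does --- any small--small pairs \RRM\ additionally forms can only help. To establish this I would run the two algorithms in lock-step on the prefixes of $\sigma$ and maintain the invariant that the number of large items already matched by \RRM, augmented by the number of its currently open single-small reservations that can still capture a future large (the $a_1$-bins awaiting a $c$, and the lone $b$-bin), is at least the number of large items already matched by \MBF.

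The inductive step turns on three observations. First, \RRM's per-item branch order tries a large companion ($d$-bin, then $c$-bin) \emph{before} ever pairing a small with another small or opening a reservation, and its ``largest-$c$''/``highest-level $d$'' tie-breaks mirror \MBF's Best-Fit preference for the highest-level feasible large; so whenever \MBF\ closes a large with a later-arriving small, \RRM\ devotes a compatible small to \emph{some} still-open large as well, keeping its matched-large count in step. Second, \RRM's reservations let it close a large with a small that arrived \emph{earlier} (an $a_1$-bin absorbing a subsequent $c$, or the $b$-bin absorbing a $c$) --- a match \MBF\ can never make, since it closes every small on arrival --- and these are pure gains. Third, small--small pairing never deprives a large of a needed companion, precisely because large-matching has priority in every branch.

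The main obstacle is the red/blue reservation cap --- \RRM\ keeps the number of red ($a_2$, $aa$) bins at most three times the number of blue ($a_1$, $ac$, $bc$) bins --- together with the fact that the two algorithms may route a given small to \emph{different} large items. I must verify that throttling the reserved $a_1$-bins, and these routing discrepancies, never let \MBF\ finish with strictly more matched large items. The clean device is the potential $\Phi = (\text{larges matched by }\RRM) - (\text{larges matched by }\MBF) + (\text{open single-smalls of }\RRM\text{ that can still catch a future large})$, which I would show stays non-negative after every item; the delicate cases are exactly (a) a small that \MBF\ uses to close a large but that \RRM\ diverts to a different large or to completing an $aa$/$bb$ pair, and (b) a large that \MBF\ leaves open but that \RRM\ closes early against a reservation. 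Checking that $\Phi$ never drops below zero in these two cases is the technical heart of the proof; once that is in hand, $\Phi\ge 0$ at the end of $\sigma$ yields that \RRM\ matches at least as many larges, hence forms at least as many pairs, hence $\RRM(\sigma)\le\MBF(\sigma)$.
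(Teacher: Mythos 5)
Your reduction is sound and is, in substance, the same accounting the paper uses: since every item exceeds $1/3$, no bin ever holds three items, so $\RRM(\sigma)\le\MBF(\sigma)$ follows once you show that \RRM leaves no more large items unpaired than \MBF does (the extra $aa$- and $bb$-pairs of \RRM only help). The gap is in how you propose to prove that key claim. Your invariant asks that the number of larges matched by \RRM, plus the number of \RRM's currently open reservations, be at least the number of larges matched by \MBF; and at the end of the sequence you conclude from $\Phi\ge 0$ that \RRM has matched at least as many larges as \MBF. That implication is a non sequitur: at termination the surviving reservations ($a_1$-bins, the lone $b$-bin) are precisely \emph{unpaired} single-small bins, so $\Phi\ge 0$ only yields ``larges matched by \RRM plus useless slack is at least larges matched by \MBF,'' which is strictly weaker than what your own reduction requires. (A smaller error of the same kind: under \RRM a $c$ item is only ever placed into an $a_1$-bin or a new bin, so the lone $b$-bin can never ``capture a future large'' and should not be counted as a reservation at all.)

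Moreover, the inductive step you defer as ``the technical heart'' cannot be completed by any purely cardinality-based potential, with or without slack. In your delicate case (a) --- \MBF matches the incoming small $x$ to a large $B$ while \RRM routes $x$ elsewhere --- the induction survives only if you can argue that $B$ is \emph{already matched} in \RRM's packing, so that \RRM entered the step strictly ahead; a count of matched larges cannot certify this, because the two algorithms may have matched different sets of larges. The repair is to strengthen the invariant from cardinalities to sets, which is exactly what the paper does: let $S^t_{RRM}$ and $S^t_{MBF}$ denote the sets of large items still unmatched after the first $t$ items, and prove by induction that $S^t_{RRM}\subseteq S^t_{MBF}$ for every $t$. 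This invariant is self-sustaining precisely where yours is stuck: when $x$ is small, both algorithms apply the Best-Fit rule over their unmatched larges, so if \MBF's choice $B$ lies in $S^t_{RRM}$, the inclusion forces $B$ to be \RRM's choice as well and both sets lose $B$; if $B\notin S^t_{RRM}$, the inclusion survives trivially because placing a small item never enlarges $S_{RRM}$; arriving larges and reservation-captures (your case (b)) also preserve it. At $t=n$ the inclusion gives $|S^n_{RRM}|\le|S^n_{MBF}|$ with no slack term, and your pair-counting reduction then finishes the lemma.
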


\begin{proof}

The key observation is that \RRM uses \BF strategy to place a small item $x$ in a bin which includes a large item; and only if such a bin does not exist, it deviates from the \BF strategy. Let $S^t_{RRM}$ and $S^t_{MBF}$ respectively denote the set of large items which are not accompanied by a small item in the packings maintained by \RRM and \MBF (respectively) after placing the first $t$ items in $\sigma$ ($0 \leq t \leq n$, where $n$ is the length of $\sigma$); we refer to these sets as \textit{single-set}s of the algorithms. We claim that for all values of $t$, a single-set of \RRM is a subset of that of \MBF, i.e., $S^t_{RRM} \subset S^t_{MBF}$. We prove this by induction. Note that for $t=0$ both single-sets are empty and the base case holds. Assume $S^t_{RRM} \subset S^t_{MBF}$ for some $t>0$ and let $x$ denote the $(t+1)$th item in $\sigma$. If $x$ is a large item, \MBF opens a bin for $x$, and $x$ will be included in the single-set for \MBF; $x$ may or may not be added to the single-set of \RRM (it will not be added if it is a $c$ item and there are $a_1$ bins in the packing). Regardless, we will have $S^{t+1}_{RRM} \subset S^{t+1}_{MBF}$. Next, assume $x$ is a small item. \MBF and \RRM both use \BF strategy to place $x$ in one of the bins in $S^t_{MBF}$ and $S^t_{RRM}$. If such a bin does not exist for \MBF, by induction hypothesis, it will not exist for \RRM and 
the induction statement holds. Next, assume \MBF places $x$ in a bin $B \in S^{t}_{MBF}$; so, $B$ will be removed from single-set of \MBF, i.e., we have $S^{t+1}_{MBF} = S^{t}_{MBF} - \{B \}$. If $B \notin S^{t}_{RRM}$, the induction statement holds because $S^{t+1}_{RRM}$ will be a subset of $S^{t}_{RRM}$ which is a indeed a subset of $S^{t+1}_{MBF}$ (a non-common items is removed from $S^t_{MBF}$). IF $B \in S^{t}_{RRM}$, \RRM places $x$ in $B$; this is because, similar to \MBF, \RRM uses \BF strategy to place small items in bins which include large items. Consecutively, we have $S^{n}_{RRM} \subset S^{n}_{MBF}$. 

The cost of \MBF for serving $\sigma$ is $n_{small} + \left|S^{n}_{MBF}\right|$ in which $n_{small}$ is the number of small items in $\sigma$. This is because \MBF opens a new bin for each small item. In the final packing of \RRM, the number of bins which include small items is no more than $n_{small}$. Other bins in the packing are associated with items in $S^n_{RRM}$; since the single-set of \RRM is a subset of that of single-set of \MBF, we have $\left|S^n_{RRM}\right| \leq \left|S^n_{MBF}\right|$. Hence, in total, the number of bins in the packing of \RRM is no more than that of \MBF.
\end{proof}

\begin{theorem} \label{majorAvgRRM}
Let $\sigma$ be a sequence of length $n$ in which item sizes are selected uniformly at random from $(0,1]$. The expected wasted space of \RHM for packing $\sigma$ is $\Theta(\sqrt{n} \log ^{3/4} n)$.
\end{theorem}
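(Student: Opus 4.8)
The plan is to mirror the average-case argument for \HM (Theorem \ref{majorAvg}), replacing the class-$1$ bookkeeping by the new Lemma \ref{lemRRM}. Recall that \RHM packs items of every class $k\ge 2$ exactly as \HM does, and packs the class-$1$ items (those in $(1/3,2/3]$) with the subroutine \RRM, whose closed bins are added to the shared pool of mature bins. First I would record the structural decomposition of the cost. Let $\sigma_1,\dots,\sigma_K$ be the subsequences of $\sigma$ formed by the items of the $K=19$ classes, and for $k\ge 2$ let $\sigma_k^-$ be $\sigma_k$ with those items removed that \RHM places directly into a mature bin. Since every mature bin has level greater than $2/3$ (two class-$1$ items, or a large item of some class $k\ge 2$ together with a companion, or a full \NF-bin of tiny items) and every class-$1$ item exceeds $1/3$, no class-$1$ item is ever diverted to a mature bin; hence all of $\sigma_1$ is processed by \RRM and the total cost splits as
\begin{equation*}
\RHM(\sigma)=\RRM(\sigma_1)+\sum_{k=2}^{K}\ROM(\sigma_k^-),
\end{equation*}
because the only bins opened are those opened by \RRM on class $1$ and by the \ROM subroutine on the non-diverted items of the remaining classes, while mature-bin placements open no new bin and are not double counted.

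Next I would bound each summand by \MBF on the full class subsequence. For class $1$, Lemma \ref{lemRRM} gives $\RRM(\sigma_1)\le\MBF(\sigma_1)$ directly; no monotonicity is needed here since $\sigma_1^-=\sigma_1$. For each $k\ge 2$, Lemma \ref{OMMBF} gives $\ROM(\sigma_k^-)\le\MBF(\sigma_k^-)$ and the monotonicity of \MBF (Lemma \ref{MBFMONO}) gives $\MBF(\sigma_k^-)\le\MBF(\sigma_k)$. Combining,
\begin{equation*}
\RHM(\sigma)\le\sum_{k=1}^{K}\MBF(\sigma_k).
\end{equation*}
Each class $k$ consists of two equal-length subintervals of $(0,1]$ with items labelled small or large with equal probability, so Lemma \ref{MBFEx} applies and yields $E[\MBF(\sigma_k)]=n_k/2+\Theta(\sqrt{n_k}\log^{3/4}n_k)$, where $n_k=|\sigma_k|$. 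Since $K$ is a constant and each $n_k=\Theta(n)$ with $\sum_k n_k=n$, summing gives $E[\RHM(\sigma)]\le n/2+\Theta(\sqrt{n}\log^{3/4}n)$. As the expected total size of the items is exactly $n/2$, this establishes the upper bound $O(\sqrt{n}\log^{3/4}n)$ on the expected waste.

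For the matching lower bound I would argue that \RHM cannot beat the optimal up-right matching on class $1$. Every class-$1$ item lies in $(1/3,2/3]$, so any bin holds at most two of them; hence the class-$1$ packing produced by \RHM is a \emph{valid} up-right matching of the $\Theta(n)$ class-$1$ points (each matched pair fits and respects the arrival order), and therefore the number of singleton, unmatched class-$1$ bins is at least the optimum $U_{n_1}=\Theta(\sqrt{n_1}\log^{3/4}n_1)$. Each such bin wastes more than $1/3$, so the expected waste is $\Omega(\sqrt{n_1}\log^{3/4}n_1)=\Omega(\sqrt{n}\log^{3/4}n)$, matching the upper bound and giving the claimed $\Theta(\sqrt{n}\log^{3/4}n)$.

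The main obstacle is entirely localized to class $1$: unlike \HM, the subroutine \RRM deviates from best fit (it reserves $a$-bins for future $c$-items and juggles red and blue bins), so it is not a priori clear that it never wastes more than \MBF. This is exactly what Lemma \ref{lemRRM} secures through its single-set induction $S^t_{RRM}\subset S^t_{MBF}$; once that lemma is in hand, the remainder is a near-verbatim repetition of the \HM analysis. The only points I would treat carefully are the verification that class-$1$ items are never absorbed into mature bins (so that $\sigma_1^-=\sigma_1$ and Lemma \ref{lemRRM} applies to the entire class-$1$ subsequence) and the fact that $K$ is a constant, which keeps the sum of the $\Theta(\sqrt{n_k}\log^{3/4}n_k)$ terms within $\Theta(\sqrt{n}\log^{3/4}n)$.
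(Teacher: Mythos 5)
Your structural decomposition and upper bound are exactly the paper's own proof of Theorem \ref{majorAvgRRM}: the identity $\RHM(\sigma)=\RRM(\sigma_1)+\sum_{k=2}^{19}\ROM(\sigma_k^-)$, the chain through Lemma \ref{lemRRM} (class 1), Lemma \ref{OMMBF} plus monotonicity of \MBF (Lemma \ref{MBFMONO}) (classes $k\ge 2$), and then Lemma \ref{MBFEx} summed over a constant number of classes, giving $E[\RHM(\sigma)]\le n/2+O(\sqrt{n}\log^{3/4}n)$. Your explicit verification that every mature bin has level above $2/3$, so that no class-1 item is ever diverted and $\sigma_1^-=\sigma_1$, is a detail the paper leaves implicit, and it is correct. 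Up to this point the proposal is sound and essentially identical to the paper.

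The genuine gap is in your lower bound. The claim that ``the class-1 packing produced by \RHM is a valid up-right matching'' is false, for two concrete reasons. First, \RRM deliberately places later-arriving $c$ items into $a_1$-bins opened earlier by $a$ items -- indeed $c$ items check those bins \emph{first} -- so on random input a constant fraction of the two-item class-1 bins pair a large item with a small item that precedes it. A valid up-right edge has the opposite temporal orientation, namely the one \MBF realizes (a small item joining a bin already holding a large item); reversed pairs are exactly what the up-right constraint forbids. Second, $aa$-bins and $bb$-bins pair two $\ominus$ points, which is not an admissible $\oplus$/$\ominus$ edge at all. Consequently \RRM is a strictly \emph{less} constrained matcher than any up-right matcher, and the inequality (number of singleton class-1 bins) $\ge U_{n_1}$ does not follow: the whole difficulty of the $\Omega(\sqrt{n}\log^{3/4}n)$ bound on $E[U_n]$ lies in the temporal constraint, and a matcher permitted to match in both temporal directions (and to pair small with small) could conceivably leave asymptotically fewer points unmatched. (Your waste accounting is fine -- singleton class-1 bins never become mature, so each does waste more than $1/3$; the failure is purely in the count.) To be fair, the paper has the same hole: its proof also establishes only the upper bound and then asserts the $\Theta$, so the $\Omega$ direction requires an argument that neither you nor the paper supplies, and it cannot be the one-line reduction to optimal up-right matching that you propose.
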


\begin{proof}

Let $\sigma^-$ be a copy of $\sigma$ in which those items which are placed in mature bins are removed. 
Also, let $\sigma_2^-, \ldots, \sigma_{19}^-$ be the subsequences of $\sigma^-$ formed by items belonging to different classes of \HM. We have 



\begin{equation*}
\HM(\sigma) = \RRM(\sigma_1) + \sum\limits_{t=2}^{19} \ROM(\sigma_t^-) \leq \sum\limits_{t=1}^{19} \MBF(\sigma_t^-) \leq \sum\limits_{t=1}^{19} \MBF(\sigma_t)
\end{equation*}

The second-to-last inequality comes from Lemmas \ref{OMMBF} and \ref{lemRRM} and the last inequality comes from Lemma \ref{MBFMONO}. Consequently, by Lemma \ref{MBFEx}, we have:
\begin{equation*}
E[\HM(\sigma)] \leq \sum\limits_{t=1}^{19} \left(n_t/2 + \Theta(\sqrt{n_t} \log^{3/4}n_t) \right) = \frac{n}{2} + \Theta(\sqrt{n}\log^{3/4}n)
\end{equation*}

The expected cost of \OPT is $n/2$ (it is not better since half items are expected to be larger than 1/2). Consequently, $E[\HM(\sigma) - \OPT(\sigma)] =\Theta(\sqrt{n}\log^{3/4}n)$
which completes the proof. 
\end{proof}


\section{Experimental Evaluation}\label{sect:exp}
The results of the previous sections indicate that \HM and \RHM have similar average-case performance as \BF if we assume a uniform, \emph{continuous} distribution for item sizes. In this section, we further observe the performance of these algorithms on sequences which follow other distributions. In doing so, we experimentally compare \HM and \RHM against classical bin packing algorithms. Table \ref{bench} gives details of the datasets that we generated for our experiments. 
In all cases, $E$ indicates the size of the bins. For each set-instance, the item sizes are randomly taken from a subset of the set $\{1, 2, \ldots, E\}$ of integers; this subset defines the \textit{range} of the items in the set-instance. Typically, we have $E=1000$, and the range is $[1,1000]$. Here, we briefly describe the considered distributions:

\begin{itemize}
	\item Discrete Uniform Distribution (DU sequences): We test the algorithms on \emph{discrete} uniform distributions. It is known that average-case behavior of bin packing algorithms can be substantially different under discrete and continuous distributions \cite{CCGJMS91}. The bin packing problem is extensively studied under discrete uniform distributions (see, e.g., \cite{CCGJMS91,CsiWoe97,Albers98p2}).
	
	\item NORMAL and POISSON sequences: Normal and Poisson distributions are two natural alternatives for generating item sizes. Both of these distributions are previously studied for generating bin packing sequences (see, e.g., \cite{FloKar91,Still08}). 
		
	\item Zipfian Distribution (ZD sequences): In bin packing sequences that follow the Zipfian distribution, item sizes follow the power-law, i.e., a large number of items are pretty small while a small number of items are quite large. The distribution has a parameter $\theta$ ($0 < \theta < 1$) that indicates how skewed the distribution is. Bin packing sequences with Zipfian distribution are considered in the experiments in \cite{apple08}.
	
	\item SORTD sequences: To create an instance of this family, we take a sequence of uniformly random-sized items and sort it in decreasing order of item sizes. This way, we can compare the performance of \emph{offline} versions of the algorithms (where sequences are sorted in decreasing order before being packed in an online manner).	
	
		\item Weibull Distribution (WD sequences): In \cite{CastIg12}, it is shown that Weibull distribution can be used to model real-world bin packing benchmarks. The values considered for the shape parameter $k$ are among the ones suggested in \cite{CastIg12}. The scale parameter of the distribution is set to be proportional to $E$. 

	\item Bounded Probability Sampled Distributions (BPS sequences): 
	To get these sequences, a random distribution is generated as follows. Given a parameter $s$, we select $s$ random numbers in a given range and assign random weights to them. The probability associated with an item in the distribution is proportional to its weight. These sequences were first introduced in \cite{bpbench3} and later used in the experiments in \cite{apple08}.

\end{itemize}

\begin{small}
\begin{table*}[!t]
\begin{center}
  \begin{tabular}{|c|c|c|c|| }
  \hline
	Set-instance  													& Distribution										& $E$ & Range \\
	\hline
	\hline
	 DU0						 												&	Uniform 														& 	100		& (1,E)		\\ 
	 DU1						 												&	Uniform 														& 	500		& (1,E)		\\ 
	 DU2						 												&	Uniform 														&1,000	& (1,E)		\\ 
 	 DU3						 												&	Uniform 														&1,000		& (1,E/2)		\\ 
	 DU4						 												&	Uniform 														&1,000		& (1,E/10)	\\ 
	 NORMAL																	& Normal ($\mu = E/2$, $\sigma = E/6$)	&1,000		&		(1,E)		\\
	 POISSON																& Poisson ($\lambda = E/3$)							& 	1,000		&		(1,E)		\\
	 ZIPF1																	& Zipfian ($\theta = 1/2$)						& 	1,000		&		(1,E)		\\
	 ZIPF2																	& Zipfian ($\theta = 1/3$)							& 	1,000		&		(1,E)		\\
	SORTD																		&	Uniform, sorted decreasing				&	1,000		& (1,E)		\\ 
	WD1																		& Weibull ($k = 0.454$)								&1,000		& (1,E)		 \\ 
	WD2																		& Weibull ($k = 1.044$)									&		1,000		& (1,E)		 \\	
	 BPSD1																		& BPS distribution ($s=100$)				& 1,000		& (E/4,E/2)  \\ 
	 BPSD2																		& BPS distribution ($s=100$)			& 1,000		& (1,E/4)  \\ 
 	 \hline 
	\end{tabular}
  \caption{\label{bench}. The distributions used to create set-instances to compare algorithms.}
\end{center}
\end{table*}
\end{small}

For each of the indicated set-instances, we create $1000$ random sequences of length $10^6$ and compute the average costs of different algorithms for packing these sequences. 
Beside Any-Fit and Harmonic family of algorithms, we also consider \textit{Sum-of-Squares (\algo{SS}) Algorithm} which performs well for discrete distributions \cite{Csi99,Csir05,Csir06}.
To place an item into a partial packing $P$, \algo{SS} defines \textit{sum of squares} of $P$, denoted by $ss(P)$, as $\sum_h N_P(h)^2$; here $N_P(h)$ denotes the number of bins with level $h$ in $P$. To place an item $x$, $SS$ places $x$ into an existing bin, or opens a new bin for $x$, so as to yield the minimum possible value of $ss(P')$ for the resulting packing $P'$. Note that \algo{SS} is not well-defined for the classical, continuous version of the bin packing problem. In \cite{Csir06}, it is proved that for any discrete distribution in which the optimal expected waste is sub-linear, \algo{SS} also has sub-linear expected waste. In particular, for those distributions where the optimal expected waste is constant (the so-called a \textit{perfect} distributions), \algo{SS} has an expected waste of at most $O(\log n)$. In our experiments, we treat \algo{SS} as the closest online algorithm to \opt. We note that the competitive ratio of \algo{SS} is at least 2 and at most 2.77 \cite{Csir05} which is worse than most online algorithms.

Figure \ref{fexp} shows the average costs (the number of opened bins) of the classical bin packing algorithms, as well as \HM and \RHM, for serving the above set-instances. For algorithms that classify items by their sizes (e.g., \HA and \HM), the number of classes (the value of $K$) is set to 20. The results in Figure \ref{fexp} indicate that in all cases \HM and \RHM perform significantly better than other members of Harmonic family. At the same time, they have comparable performance with \BF and \FF. In most cases, \HM and \RHM perform better than \FF, and in some cases, e.g., NORMAL set, they even perform better than \BF. 

Comparing the costs of \HM and \RHM for DU0, DU1, and DU2, we observe that their relative performance improves when the size of the bins (i.e., $E$) increases. For small value of $E=100$ (UD0), these algorithms are slightly worse than \FF. However, as $E$ increases to $1000$ (UD2), the algorithms perform better than \FF and converge to \BF. This is in accordance with the results in Sections \ref{avgHmSection} and \ref{khar} which imply that, for continues uniform distribution, the expected costs of \HM and \RHM converge to that of \BF. Note that as $E$ goes to infinity, the discrete distribution estimates a continuous one.

For symmetric distributions, where items of sizes $x$ and $E-x$ appear with the same probability, the expected costs of \HM and \RHM are equal. A difference between the packings of \HM and \RHM happens when a number of small items of the first class (items of type $a$ in \RHM) appear before any large item of the same class (an item of type $c$). In these cases, \RHM `reserves' some bins for subsequent large items (by declaring the bins as being blue). For symmetric distributions, however, it is unlikely that many small items appear before the next large item. Consequently, as the numbers in Figure \ref{fexp} reflect, the average costs of \HM and \RHM are the same for symmetric sequences. On the other hand, for asymmetric sequences where small items are more likely to appear, e.g., DU3 and POISSON, \HM has an advantage over \RHM. In these sequences, there is no reason to reserve bins for the large items since they are unlikely to appear. 

Finally, we note that for SORTD, the costs of \HM and \RHM are comparable to those of \BF and \FF. This implies that, on average, the offline versions of these algorithms are comparable with the well-known First-Fit-Decreasing and Best-Fit-Decreasing algorithms. It remains open whether the same statement holds for the worst-case performance of the offline algorithms. 

\begin{figure}[!t]
\centering
\includegraphics[width=\columnwidth,trim = 2.5cm 33cm 3.5cm 0mm, clip]{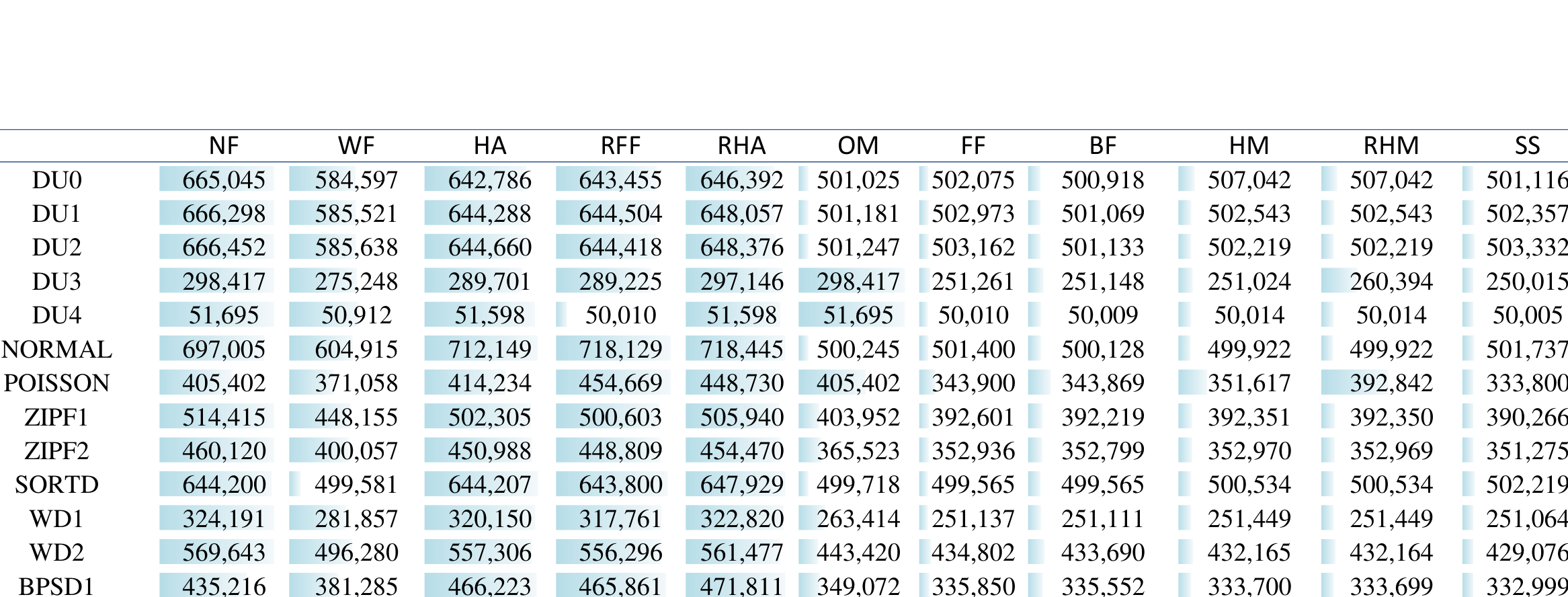}\label{fig:results}
\caption{Average performance of online bin packing algorithms for different set-instances. 
The indicated numbers for each algorithm represent the average costs of the algorithm for different set-instances. In most cases, there is a gap between the cost of \HM (and \RHM) and other Harmonic-based algorithms. 
The data-bar of an algorithm for a set-instance indicates the \emph{rank} of the cost of the algorithm, among the costs of all algorithms, for serving that set-instance. }
\label{fexp}%
\end{figure}

\section{Discussion}

\HM and \RHM can be seen as variants of Harmonic and Refined Harmonic algorithms in which small and large items are carefully matched in order to improve the average performance, while preserving the worst-case performance. We believe that the same approach can be applied to improve the average performance of other Super Harmonic algorithms, and in particular that of Harmonic++ algorithm (which is currently the best online bin packing algorithm, regarding the competitive ratio). Given the complicated nature of these algorithm, modifying them involves a detailed analysis which we leave as a future work.

The relative worst order analysis is an alternative method for comparing online algorithms. This method considers the worst ordering of a given sequence for two online algorithms and indicates
 their costs on these orderings. Then, among all sequences, it considers the one that maximizes the worst-case ratio between the two algorithms (see \cite{BoyaFav07} for a precise definition). It is known that under relative worst order analysis, First Fit is no worse than any Any Fit algorithm (and in particular Best Fit) \cite{BoyaFav07}. Also, Harmonic algorithm is not comparable to \FF for sequences which include very small items. However, when all items are larger than $\frac{1}{K+1}$ ($K$ is the parameter of the Harmonic algorithm), Harmonic is better than \FF by a factor of 6/5 \cite{BoyaFav07}. Applying Theorem \ref{th1}, we conclude that when all items are larger than $\frac{1}{K+2}$, Harmonic Match with parameter $K$ is strictly better than \FF and \BF under the relative worst order analysis. This provides another evidence for the advantage of Harmonic Match over \BF and \FF. 

Many online bin packing algorithms, e.g., \BF and \FF, have the undesired property that removing an item might increase the cost of the algorithm \cite{Murgolo88}. This `anomalous' behavior results in an unstable algorithm which is harder to analyze. 
As mentioned earlier, an algorithm is called `monotone' if removing an item does not increase its cost. 
It is not clear whether \HM and \RHM are monotone; however, a slight twist in \HM results in a monotone algorithm. Consider a modified algorithm \HMM that works similar to \HM except that it does not maintain mature bins, i.e., it closes a bin as soon as it becomes mature. It is not hard to see that \HMM is a monotone algorithm. At the same time, the results related to the worst-case and average-case performance of \HM hold also for \HMM (Corollary \ref{col1} and Theorem \ref{majorAvg}). However, \HMM performs slightly worse than \HM on discrete distributions evaluated in Section \ref{sect:exp}. We leave further analysis of monotonous behaviour of this algorithm as a future work.

%
\newpage

\bibliographystyle{splncs03}
\bibliography{confs,online}

\begin{thebibliography}{10}
\providecommand{\url}[1]{\texttt{#1}}
\providecommand{\urlprefix}{URL }

\bibitem{Albers98p2}
Albers, S., Mitzenmacher, M.: Average-case analyses of {F}irst {F}it and
  {R}andom {F}it bin packing. In: Proc. 9th Symp. on Discrete Algorithms
  (SODA). pp. 290--299. Society for Industrial and Applied Mathematics (1998)

\bibitem{apple08}
Applegate, D., S, L., Buriol, B.L.D., Johnson, D.S., Shor, P.W.: The
  cutting-stock approach to bin packing: Theory and experiments. In: Proc. 5th
  Meeting on Algorithm Engineering and Experiments (ALENEX) (2003)

\bibitem{BeJLMM84}
Bentley, J.L., Johnson, D.S., Leighton, F.T., McGeoch, C.C., McGeoch, L.A.:
  Some unexpected expected behavior results for bin packing. In: Proc. 16th
  Symp. on Theory of Computing (STOC). pp. 279--288 (1984)

\bibitem{BoyaFav07}
Boyar, J., Favrholdt, L.M.: The relative worst order ratio for online
  algorithms. ACM Trans. Algorithms  3(2) (2007)

\bibitem{CastIg12}
Casti\~{n}eiras, I., De~Cauwer, M., O'Sullivan, B.: Weibull-based benchmarks
  for bin packing. In: Proc. 18th International Conference on Principles and
  Practice of Constraint Programming (CP). pp. 207--222. Springer-Verlag (2012)

\bibitem{CCGJMS91}
Coffman, E.G., Courcoubetis, C.A., Garey, M.R., Johnson, D.S., McGeogh, L.A.,
  Shor, P.W., Weber, R.R., Yannakakis, M.: Fundamental discrepancies between
  average-case analyses under discrete and continuous distributions - a bin
  packing case study. In: Proc. 23rd Symp. on Theory of Computing (STOC). pp.
  230--240 (1991)

\bibitem{CoGaJo97}
Coffman, E.G., Garey, M.R., Johnson, D.S.: Approximation algorithms for bin
  packing: {A} survey. In: Hochbaum, D. (ed.) Approximation algorithms for
  NP-hard Problems. PWS Publishing Co. (1997)

\bibitem{CoHoSY80}
Coffman, E.G., Hofri, M., So, K., Yao, A.C.C.: A stochastic model of bin
  packing. Inform. and Control  44,  105--115 (1980)

\bibitem{CoJoSW95}
Coffman, E.G., Johnson, D.S., Shor, P.W., Weber, R.R.: Bin packing with
  discrete item sizes, part ii: {Average} case behaviour of {F}irst {F}it
  (1996), unpublished manuscript

\bibitem{CofLue91}
Coffman, E.G., Lueker, G.S.: Probabilistic analysis of Packing and Partitioning
  Algorithms. John Wiley, New York (1991)

\bibitem{CofSho91}
Coffman, E.G., Shor, P.W.: A simple proof of the ${O(\sqrt{n log^3/4})}$
  up-right matching bound. SIAM J. Discrete Math.  4,  48--57 (1991)

\bibitem{BPsurvey2013}
Coffman~Jr., E.G., Csirik, J., Galambos, G., Martello, S., Vigo, D.: Bin
  packing approximation algorithms: survey and classification. In: Pardalos,
  P.M., Du, D.Z., Graham, R.L. (eds.) Handbook of Combinatorial Optimization,
  pp. 455--531. Springer (2013)

\bibitem{CsiriGalam86}
Csirik, J., Galambos, G.: An ${O}(n)$ bin-packing algorithm for uniformly
  distributed data. Computing  36(4),  313--319 (1986)

\bibitem{Csir05}
Csirik, J., Johnson, D.S., Kenyon, C.: On the worst-case performance of the
  sum-of-squares algorithm for bin packing. CoRR  abs/cs/0509031 (2005)

\bibitem{Csir06}
Csirik, J., Johnson, D.S., Kenyon, C., Orlin, J.B., Shor, P.W., Weber, R.R.: On
  the sum-of-squares algorithm for bin packing. J. ACM  53,  1--65 (2006)

\bibitem{Csi99}
Csirik, J., Johnson, D.S., Kenyon, C., Shor, P.W., Weber, R.R.: A self
  organizing bin packing heuristic. In: Proc. 1st Meeting on Algorithm
  Engineering and Experiments (ALENEX). pp. 246--265. Springer-Verlag, London,
  UK, UK (1999)

\bibitem{CsiWoe97}
Csirik, J., Woeginger, G.J.: Shelf algorithms for on-line strip packing.
  Inform. Process. Lett.  63,  171--175 (1997)

\bibitem{bpbench3}
Degraeve, Z., Peeters, M.: Optimal integer solutions to industrial
  cutting-stock problems: Part 2, benchmark results. INFORMS J. on Computing
  15(1),  58--81 (Jan 2003)

\bibitem{FloKar91}
Floyd, S., Karp, R.M.: {FFD} bin packing for item sizes with uniform
  distributions on [0, 1/2]. Algorithmica  6(1-6),  222--240 (1991)

\bibitem{JoDUGG74}
Johnson, D.S., Demers, A., Ullman, J.D., Garey, M.R., Graham, R.L.: Worst-case
  performance bounds for simple one-dimensional packing algorithms. SIAM J.
  Comput.  3,  256--278 (1974)

\bibitem{KaLuMa84}
Karp, R.M., Luby, M., Marchetti-Spaccamela, A.: Probabilistic analysis of
  multi-dimensional binpacking problems. In: Proc. 16th Symp. on Theory of
  Computing (STOC). pp. 289--298 (1984)

\bibitem{LeeLee85}
Lee, C.C., Lee, D.T.: A simple online bin packing algorithm. J. ACM  32,
  562--572 (1985)

\bibitem{LeiSho89}
Leighton, F.T., Shor, P.: Tight bounds for minimax grid matching with
  applications to the average case analysis of algorithms. Combinatorica  9,
  161--187 (1989)

\bibitem{Murgolo88}
Murgolo, F.D.: Anomalous behavior in bin packing algorithms. Discrete Appl.
  Math.  21(3),  229--243 (1988)

\bibitem{RamaTsuga89}
Ramanan, P., Tsuga, K.: Average-case analysis of the modified harmonic
  algorithm. Algorithmica  4,  519--533 (1989)

\bibitem{RamBrowLeeLee89}
Ramanan., P.V., Brown, D.J., Lee, C.C., Lee, D.T.: On-line bin packing in
  linear time. J. Algorithms  10,  305--326 (1989)

\bibitem{RheTal88a}
Rhee, W.T., Talagrand, M.: Exact bounds for the stochastic upward matching
  problem. Trans. AMS  307(1),  109--125 (1988)

\bibitem{Seid02}
Seiden, S.S.: On the online bin packing problem. J. ACM  49,  640--671 (2002)

\bibitem{Shor86}
Shor, P.W.: The average-case analysis of some online algorithms for bin
  packing. Combinatorica  6,  179--200 (1986)

\bibitem{Shor91}
Shor, P.W.: How to pack better than {B}est-{F}it: {T}ight bounds for
  average-case on-line bin packing. In: Proc. 32nd Symp. on Foundations of
  Computer Science (FOCS). pp. 752--759 (1991)

\bibitem{Still08}
Stille, W.M.: Solution Techniques for specific Bin Packing Problems with
  Applications to Assembly Line Optimization. Ph.D. thesis, TU Darmstadt (2008)

\bibitem{GuChXu02}
Xiaodong, G., Guoliang, C., Yinlong, X.: Deep performance analysis of refined
  harmonic bin packing algorithm. J. Comput. Sci. Tech.  17,  213--218 (2002)

\bibitem{Yao80A}
Yao, A.C.C.: New algorithms for bin packing. J. ACM  27,  207--227 (1980)

\end{thebibliography}

\newpage

\end{document}